
\documentclass[a4paper,fleqn]{cas-sc}

\usepackage[numbers]{natbib}

\usepackage{amsthm}
\usepackage{amssymb,bm}
\usepackage{amsmath}  
\usepackage{url}

\usepackage{graphicx}
\usepackage{subfigure}
\usepackage{setspace}

\usepackage{arydshln}

\usepackage{threeparttable} 
\usepackage[section]{placeins}
\usepackage{floatrow}
\floatsetup[table]{capposition=top}
\floatsetup[figure]{capposition=bottom}

%

\newtheorem{theorem}{Theorem} 
\newtheorem{lemma}{Lemma} 
\newtheorem{remark}{Remark} 
\newtheorem{definition}{Definition} 

\newtheorem{example}{Example}
\newcommand{\RNum}[1]{\uppercase\expandafter{\romannumeral #1\relax}}

\def\tsc#1{\csdef{#1}{\textsc{\lowercase{#1}}\xspace}}
\tsc{WGM}
\tsc{QE}
\tsc{EP}
\tsc{PMS}
\tsc{BEC}
\tsc{DE}

\begin{document}

\let\WriteBookmarks\relax
\def\floatpagepagefraction{1}
\def\textpagefraction{.001}
\shorttitle{Unified framework for nonlinear grey model}
\shortauthors{Lu Yang et~al.}

\title [mode = title]{On unified framework for nonlinear grey system models: an integro-differential equation perspective}


\author[1]{Lu Yang}[style=chinese]
\ead{yang_lu@nuaa.edu.cn}
\credit{Conceptualization, Methodology, Software, Writing - Original draft}

\address[1]{College of Economics and Management, Nanjing University of Aeronautics and Astronautics, Nanjing 211106, PR China}

\author[1]{Naiming Xie}[style=chinese]
\cormark[1]
\ead{xienaiming@nuaa.edu.cn}

\credit{Funding acquisition, Supervision, Writing - review \& editing}

\author[1]{Baolei Wei}[style=chinese]
\ead{weibl@nuaa.edu.cn}
\credit{Methodology, Validation, Writing - review \& editing}

\author[1]{Xiaolei Wang}[style=chinese]
\ead{wangxiaolei0721@163.com}
\credit{Visualization, Writing - review \& editing}




\cortext[cor1]{Corresponding Author}


\begin{abstract}
Nonlinear grey system models, serving to time series forecasting, are extensively used in diverse areas of science and engineering. However, most research concerns improving classical models and developing novel models, relatively limited attention has been paid to the relationship among diverse models and the modelling mechanism. The current paper proposes a unified framework and reconstructs the unified model from an integro-differential equation perspective. First, we propose a methodological framework that subsumes various nonlinear grey system models as special cases, providing a cumulative sum series-orientated modelling paradigm. Then, by introducing an integral operator, the unified model is reduced to an equivalent integro-differential equation; on this basis, the structural parameters and initial value are estimated simultaneously via the integral matching approach. The modelling procedure comparison further indicates that the integral matching-based integro-differential equation provides a direct modelling paradigm. Next, large-scale Monte Carlo simulations are conducted to compare the finite sample performance, and the results show that the reduced model has higher accuracy and robustness to noise. Applications of forecasting the municipal sewage discharge and water consumption in the Yangtze River Delta of China further illustrate the effectiveness of the reconstructed nonlinear grey models.
\end{abstract}



\begin{keywords}
nonlinear grey system models \sep cumulative sum operator \sep integro-differential equation \sep integral matching  \sep municipal sewage discharge
\end{keywords}

\maketitle

\section{Introduction}

\setlength\baselineskip{15pt}

State space models, a powerful framework for time series, are extensively used to fit the measurement data and then forecast the evolution of dynamic systems.
The main feature of state space models is that depicts the underlying process describing dynamic law of the system in terms of states \citep{king2012review}.
There exists a diversity of state space models, such as transfer function \citep{stoica2000mimo}, exponential smoothing \citep{hyndman2002state}, and dynamic linear regression \citep{young2011gauss}.
In nature, grey system models proposed by \citet{deng1984grey} belong to the state space systems.
One of the key issues of the grey system models is identifying the governing equations from measurement data for physical understanding, forecasting and controlling.
Over the past four decades, a range of grey system models, including linear and nonlinear systems, have been emerging to solve a class of time series forecasting problems.
It is worth noting that rather than fitting the original time series, however, grey system models visualize and identify the pattern hidden in the original time series by utilizing cumulative sum (Cusum) operator, distinguishing themselves from other classical state space models.
Recently, by employing the integral matching approach, we explained the mechanism of Cusum operator from the perspectives of mathematical analysis and parameter estimations \citep{wei2020understanding} .
Then, on this basis, a unified modelling paradigm for linear grey forecasting models is proposed to link grey system models with dynamic data analysis \citep{wei2020unified1,wei2020unified}.

Nonlinear grey system models, aimed at describing more complex systems, have broader applicability than linear ones, having attracted considerable attention in the grey system community. Up to now, most research is focused on improving the classical models and developing novel ones, which enriches nonlinear grey system model families.
As an emerging method, however, there remain several challenges needed to be addressed.
First, there exist many nonlinear grey system models with different forms, making it difficult for researchers to perform property analysis. Hence it is required to form a unified methodological framework, fully illustrating the modelling paradigm.
The distinguishing feature of nonlinear grey models is fitting the Cusum series, but the reason why nonlinear grey systems models use the Cusum operator has not been well explained. So, is it possible that fitting the original time series instead of Cusum one but achieving the same forecasts?
If so, the mechanism of nonlinear grey models will be revealed, making the modelling results easier to understand and explain.
Besides, both parameter estimation and initial value selection affect the forecasting results.
Current research tends to use a two-step method: estimating the structural parameters and subsequently selecting the initial condition via a given strategy (see \cite{wei2020unified1} for the most used three strategies).
The inconsistent objective functions in two separate stages may introduce extra errors and degrade outcomes significantly \cite{wei2019data}, so it is necessary to develop a one-step parameter estimation method.

Therefore, in this work, we seek a unified framework for nonlinear grey system models and reconstruct the resultant unified form via an integro-differential equation.
The principal contributions are summarized as follows:
\begin{enumerate}
	\item[(1)] We propose a new methodological framework for nonlinear grey system models, which not only has the ability to unify the existing systems, but may also induce novel ones, providing the basis of nonlinear grey modelling paradigm, i.e., the Cusum series-orientated nonlinear differential equation models.

	\item[(2)] By introducing an integral operator, we reduce the unified representation to an integro-differential equation, allowing us to fit the original time series directly and analyse the mechanism of Cusum operator from both mathematical analysis and parameter estimation perspectives.

	\item[(3)] Based on the reduced integro-differential equation, a new one-step parameter estimation approach, integral matching, is introduced to estimate structural parameters and initial value simultaneously.
\end{enumerate}

The remaining parts are organized as follows.
Section \ref{sec2} reviews the existing nonlinear grey system models from the original to the extended ones.
Section \ref{sec3} presents the unified framework for nonlinear grey system models.
Section \ref{sec4} proposes the reduced reconstruction via an integro-differential equation.
Section \ref{sec5} conducts Monte Carlo simulations to evaluate the finite sample performance.
Section \ref{sec6} provides a real-world application and section \ref{sec7} concludes the work.


\section{Literature review} \label{sec2}

In this section, we review the existing nonlinear grey system models from two viewpoints:
the basic nonlinear grey model and the extended ones.

\subsection{The basic grey Verhulst model}
The grey Verhulst model (GVM(1,1)) \cite{deng1984grey}, where the first "1" denotes the order of derivative and the second "1" denotes the dimensional of variables, aims at fitting inverted U-shaped time series and lays the foundation for the development of nonlinear grey system models.

\begin{definition}
	\label{def1}
    \citep{wei2020unified1}
	For a time series with n samples $X(t) = \left\lbrace x(t_1),~x(t_2),\cdots,~x(t_n) \right\rbrace$, the Cusum series is defined as $Y(t) = \left\lbrace y(t_1),~y(t_2),\cdots,~y(t_n) \right\rbrace$ for $y(t_k) = \sum_{i = 1}^{k} h_i x(t_k)$, where $h_1 = 1$ and $h_k = t_k - t_{k-1}$ for $k\ge2$.
\end{definition}

The grey Verhulst model consists of the differential equation
\begin{equation} \label{verh}
	\frac{d}{dt}y(t) = ay(t) + b\left( y(t)\right) ^{2},~t\ge t_1
\end{equation}
and the corresponding discrete-time equation
\begin{equation}
	x(t_k) = a\left[ \lambda y(t_{k-1}) + (1- \lambda) y(t_{k}) \right] +
	b \left[ \lambda y(t_{k-1}) + (1- \lambda) y(t_{k}) \right] ^2,
	~k = 2,~3\cdots,~n
\end{equation}
where $\lambda \in [0,1]$ is referred to as a background coefficient whose value is always set to $0.5$.

The structural parameters can be easily estimated by the least squares criterion
\[
\begin{bmatrix}
	\hat{a}~~ \hat{b}
\end{bmatrix}^\top  =
	\mathop{\arg\min}\limits_{a,b} \left\|
	\mathbf{Y}-\mathbf{B} \begin{bmatrix} a~~b \end{bmatrix}^\top
	\right\|_2^2
	= {\left( {{\mathbf{B}^\top}\mathbf{B}} \right)^{ - 1}}{\mathbf{B}^\top}\mathbf{Y}
\]
where
\[
\mathbf{B}=
\begin{bmatrix}
	\frac{y({t_1})+y({t_2})}{2} & \left(\frac{y({t_1})+y({t_2})}{2} \right) ^2 \\
	\frac{y({t_2})+y({t_3})}{2} & \left(\frac{y({t_2})+y({t_3})}{2} \right) ^2 \\
    \vdots & \vdots \\
	\frac{y({t_{n-1}})+y({t_n})}{2} & \left(\frac{y({t_{n-1}})+y({t_n})}{2} \right) ^2 \\
\end{bmatrix},
~
\mathbf{Y}=
\begin{bmatrix}
	{x({t_2})}\\
	{x({t_3})}\\
	\vdots \\
	{x({t_n})}
\end{bmatrix}.
\]

Let the initial value be $y(t_1) = \eta_y$, the closed form solution of equation \eqref{verh} is
\[
\hat{y}(t) = 
	\left[-\frac{\hat {b}}{\hat{a}} + e^{-\hat{a}(t - t_0)} \cdot \left( \frac{1}{{\eta}_y} + 
	\frac{\hat{b}}{\hat{a}} \right)  \right] ^{-1}
\]
then the desired forecasts $\hat{X}$ can be calculated by the inverse Cusum operator derived from Definition \ref{def1}, that is,
\[
\begin{cases}
	\hat{x}(t_1) = \eta_y \\
	\hat{x}(t_k) = \frac{1}{h_k} \left( \hat{y}(t_k) - \hat{y}(t_{k-1})\right), ~k = 2,~3,\cdots,~n+r
\end{cases}
\]
where $r$ is the forecasting horizon.

Subsequently, a great deal of effort has been devoted to improve the accuracy and the existing research can be divided into the following three types: (i) improving the structural parameter estimates via intelligence algorithm \citep{shaikh_forecasting_2017,wang2019modelling}, weighted least squares \citep{tang_study_2020}, and parameter transformation \citep{evans2014alternative}, (ii) optimizing the initial value selection strategy \cite{chen2010combination}, and (iii) searching the optimal background coefficient \cite{duan2020grey}.

\subsection{Extended nonlinear grey system models}

Following the similar ideas and modelling procedures, the extended models utilize other nonlinear differential equations to fit Cusum series.
A rough classification of the nonlinear grey system models includes two categories: single-output and multi-output models, and we present the main extensions in Table \ref{review}.

\begin{table}[h]
	\centering
	\begin{threeparttable}
		\centering
		\scriptsize
		\caption{The coupled equations of continuous-time nonlinear grey system models}.
		\begin{spacing}{1.5}
			\setlength{\tabcolsep}{1.0mm}
			{
				\begin{tabular}{lllll}
					\toprule
					Type &	Model  & Differential equation	& Difference equation	& Ref\\
					\midrule
					single-output 	& GVM(1,1)	&  $\frac{d}{dt}y(t)=ay(t)+b\left( y(t) \right)^2 $
					& $x(t_k) =a\left(z(t_k)\right) + b\left(z(t_k)\right)^2$	& \cite{shaikh_forecasting_2017}\\
					& GGVM(1,1)	& $\frac{d}{dt}y(t)=ay(t)+b\left( y(t) \right)^2+c$
					& $x(t_k) =a\left(z(t_k)\right)+ b\left(z(t_k)\right)^2+c$ 	& \cite{zhou_grey_2020}
					\\
						& GRM(1,1)	& $\frac{d}{dt}y(t) = ay(t)+b\left( y(t) \right)^2+c(k-1)^\gamma+d$
					&$x(t_k)=az(t_k)+b\left(z(t_k) \right)^2+c(k-1)^\gamma+d $ & \cite{gao_novel_2021}
					\\
					& NGM(1,1,$\alpha$)	& $\frac{d}{dt}y(t)=a  \left( y(t)\right) ^\alpha  + b$
					& $x(t_k)=a\left(z(t_k) \right)^\alpha +b$	& \cite{wang_forecasting_2019}\\
					& NGBM(1,1)	& $\frac{d}{dt}y(t)=ay(t)+b\left( y(t)\right) ^{\gamma}$ 	& $x(t_k) =a\left(z(t_k)\right)+
					b\left(z(t_k)\right)^\gamma$ & \cite{chen_forecasting_2008}
					\\
					& GRBM(1,1)	& $\frac{d}{dt}y(t)=ay(t)\mathsf+b{{\left( y(t) \right)}^{\gamma }}+c$
					&	$x(t_k) =a\left( z(t_k)\right) + b\left(z(t_k)\right)^2+c$	& \cite{xiao_novel_2020} \\
					& NGBMC(1,n)	&  {$\frac{d}{dt} y_1(rp+t) = ay_1(rp+t) + \left(\sum\limits_{i=2}^{n} b_iy_i(t)+u \right)\left( y_1(rp+t)\right)^2$}
					& no explicit from	& \cite{yu_novel_2021} 
					\\
					& KRNGM(1,n)	& $\frac{d}{dt} y_\mathsf{out}(t)=ay_{\mathsf{out}}(t)+\bm \omega^\top \bm \phi(\bm{y}_\mathsf{in}(t))+u$
					&$x_{\mathsf{out}}(t_k)=az_\mathsf{out}(t_k)+ \bm \omega^\top \bm \phi(\bm{y}_\mathsf{in}(t_k))+u$	& \cite{ma_kernel-based_2018}\\
					multi-output
					& GLVM(1,2)
					& no explicit form
					& $\begin{cases}
						x_1(t_k) = a_1z_1(t_k)+ b_1\left( z_1(t_k)\right)^2 + c_1z_1(t_k)z_2(t_k)  \\
						x_2(t_k) = a_2z_2(t_k) + b_2 \left( z_2(t_k)\right) ^2 + c_2z_1(t_k)z_2(t_k)
					\end{cases}$
					& \cite{gatabazi_grey_2019} 
					\\
					\bottomrule
				\end{tabular}
			}
		\label{review}
			\begin{tablenotes}
				\footnotesize
				\item[1] $z_i(t_k) = \omega y_i(t_k)+ (1-\omega)y_i(t_{k-1})$
			\end{tablenotes}
		\end{spacing}
	\end{threeparttable}
\end{table}

Table \ref{review} shows that single-output models subsume single-variable and multi-variable models.
Single-variable models share a similar representation. For instance, when $c = 0$, GRBM(1,1) can reduce to NBGM(1,1) and, then, if $\gamma$ in NGBM(1,1) equals 2, then GVM(1,1) is obtained.
For multi-variable extensions, in particular, KRNGM(1,n) uses kernel tricks to introduce other variables.
Further ones lead to grey output models, but to our knowledge, research of multi-output models is sparse and mainly focused on the grey prey-predator system (GLVM(1,2)) concerning accuracy improvement \citep{wu_grey_2012,mao_impact_2015} and stability analysis \citep{gatabazi_grey_2019,wang_testing_2016}.

Another extension is hybrid models, including the grey time-delayed Verhulst model which employs a time-delay differential equation \citep{wang2013forecasting}, the F-NGBM(1,1) model which uses the Fourier series as residual correction tool \cite{chia-nan_improved_2015}, and the metabolic nonlinear grey–autoregressive integrated moving average model which couples NGM(1,1,$\alpha$) and ARIMA models \citep{wang2018forecasting}.

Similar to the research route of grey Verhulst model, a range of studies were implemented to optimize the aforementioned models, that is, the structural parameter, initial value and background coefficient optimization for the single-variable cases \citep{hsu_genetic_2010,yuan_safsa-_2020}, the multi-variable cases \citep{duman_estimation_2019}, and the multi-output cases \citep{gatabazi_fractional_2019}.

\section{Unified framework for nonlinear grey system models} \label{sec3}

In this section, we propose a unified framework for existing nonlinear grey system models. For a $d$-dimensional state vector $\bm{x}(t)$, the observations are sampled at time points $\left\{t_1, t_2, \cdots, t_n \right\}$ and arranged into the following matrix:
\[
\begin{bmatrix}
	\bm{x}^\top(t_1) \\
	\bm{x}^\top(t_2)  \\
	\vdots \\
	\bm{x}^\top(t_n)
\end{bmatrix}
=
\begin{bmatrix}
	x_1(t_1) & x_2(t_1)	 & \cdots & x_d(t_1) \\
	x_1(t_2) & x_2(t_2)	 & \cdots & x_d(t_2) \\
	\vdots   & \vdots    & \ddots & \vdots	 \\
	x_1(t_n) & x_2(t_n)	 & \cdots & x_d(t_n) \\
\end{bmatrix}.
\]
Then, the corresponding Cusum operator matrix is defined as
\[
\begin{bmatrix}
	\bm{y}^\top(t_1) \\
	\bm{y}^\top(t_2)  \\
	\vdots \\
	\bm{y}^\top(t_n)
\end{bmatrix}
=
\begin{bmatrix}
	y_1(t_1) & y_2(t_1)	 & \cdots & y_d(t_1) \\
	y_1(t_2) & y_2(t_2)	 & \cdots & y_d(t_2) \\
	\vdots   & \vdots    & \ddots & \vdots	 \\
	y_1(t_n) & y_2(t_n)	 & \cdots & y_d(t_n) \\
\end{bmatrix}.
\]
where ${\bm{y}}(t_i)= \sum_{i=1}^{k}h_i{\bm{x}}(t_i)$ for $h_1=1$ and $h_k=t_k-t_{k-1}$, $k\geq2$.

Nonlinear grey system models utilize first order nonlinear differential equations to describe the evolution of the Cusum variable.
Consider a unified representation
\begin{equation} \label{gm}
	\frac{d}{dt}\bm{y}(t) =  \bm{\theta}_{\text{L}} \bm{y}(t) + \bm{\theta}_{\text{N}} \bm{N}(\bm{y}(t)) +\bm{\beta},~\bm{y}(t_1)=\bm{\eta}_y,~ t\ge t_1
\end{equation}
where $\bm{y}(t)\in \mathbb{R}^d$ is the Cusum state, $\bm{N}\left(\bm{y}(t) \right):\mathbb{R}^d \rightarrow \mathbb{R}^p$ is a $p$-dimensional nonlinear vector function, $\bm{\eta}_y$ is the unknown initial value,
$\bm{\theta}_{\text{L}} \in \mathbb{R}^{d \times d}$, $\bm{\theta}_{\text{N}} \in \mathbb{R}^{d \times p}$ and $\bm{\beta}\in\mathbb{R}^d$ are the unknown structural parameters.

Note that, here, we explicitly indicate that the dynamics \eqref{gm} have both linear and nonlinear contributions for the convenience of parameter estimation.
Then, the two-step least squares is utilized to estimate the structural parameters and initial value in succession. In the first step, using the implicit Midpoint method gives the corresponding discrete-time equation
\begin{equation}\label{gm.disc}
	\frac{\bm{y}(t_k) - \bm{y}(t_{k-1})}{t_k - t_{k-1}} =
		\bm{x}(t_k) \approx
		 \bm{\theta}_{\text{L}}  \frac{\bm{y}(t_k) + \bm{y}(t_{k-1})}{2}
		+ \bm{\theta}_{\text{N}} \bm{N} \left( \frac{\bm{y}(t_k) + \bm{y}(t_{k-1})}{2} \right)
		+ \bm{\beta} + \bm{\epsilon}(k)
\end{equation}
where $\bm{\epsilon}(k)$ is the model error and $\frac{1}{2}\left(\bm{y}(t_k)+\bm{y}(t_{k-1}) \right)$ is referred to as background value in the grey system terminology.
By substituting $k = 2,3,\cdots,n$ into equation \eqref{gm.disc} and arranging the resulting $n-1$ algebraic equations into a matrix form, we have
\begin{equation}\label{gm.ls}
	\bm{X} = {\Theta}(\bm{y}){\Xi} + \Gamma
\end{equation}
where
\[
{\Xi} =
	\begingroup
	\renewcommand*{\arraystretch}{1.2}
	 \begin{bmatrix}
		\bm{\theta}^\top_{\text{L}} \\
		\bm{\theta}^\top_{\text{N}} \\
		\bm{\beta}^\top
	\end{bmatrix},
	\endgroup
~
\bm{X} =
	\begin{bmatrix}
		\bm{x}^\top(t_2) \\
		\bm{x}^\top(t_3)  \\
		\vdots \\
		\bm{x}^\top(t_n)
	\end{bmatrix},
~
{\Theta}(\bm{y}) =
	\begin{bmatrix}
		\frac{\bm{y}^\top(t_1)+\bm{y}^\top(t_2)}{2} &
		\bm{N}^\top \left(\frac{\bm{y}(t_1)+\bm{y}(t_2)}{2}\right) & 1\\
		\frac{\bm{y}^\top(t_2)+\bm{y}^\top(t_3)}{2} &
		\bm{N}^\top \left(\frac{\bm{y}(t_2)+\bm{y}(t_3)}{2}\right) & 1\\
		\vdots & \vdots & \vdots \\
		\frac{\bm{y}^\top(t_{n-1})+\bm{y}^\top(t_n)}{2} & \bm{N}^\top\left(\frac{\bm{y}(t_{n-1})+\bm{y}(t_n)}{2}\right) & 1
	\end{bmatrix},
~
{\Gamma} = \begin{bmatrix}
	\bm{\epsilon}^\top(2)\\
	\bm{\epsilon}^\top(3)\\
	\vdots \\
	\bm{\epsilon}^\top(n)\\
\end{bmatrix}.
\]

This then allows for the formulation of a regression problem to estimate the structural parameters:
\begin{equation}\label{eq2.5}
	\mathop{\min}_{{\Xi}} ~\mathcal{L}({\Xi}) = \Big\|  \bm{X} -
				{\Theta}(\bm{y}){\Xi} \Big\|^2_{\bm{\mathsf{F}}}
\end{equation}
where $\| \cdot \|_\mathsf{F}$ is the Frobenius norm.
Note, problem \eqref{eq2.5} is a linear least squares problem due to the separable parameters. Differentiating $\mathcal{L}({\Xi})$ with respect to ${\Xi}$ yields
\[
\frac{\partial}{\partial {\Xi}} \mathcal{L}({\Xi}) =
	\frac{\partial}{\partial {\Xi}} \mathsf{Tr} \left(\left[ \bm{X}-{\Theta}(\bm{y}){\Xi} \right]^\top \left[ \bm{X}-{\Theta}(\bm{y}){\Xi}\right]  \right) = -2\Theta^{\top}(\bm{y})\Xi + 2\Theta^{\top}(\bm{y})\Theta(\bm{y}){\Xi}
\]
where, setting the derivative to zero gives the least-squares estimates
\[
	{\hat{\Xi}} = \left( \Theta^{\top}(\bm{y})\Theta(\bm{y})\right)^{-1} \Theta^{\top}(\bm{y}) \bm{X}.
\]

In the second step, by substituting the estimated structural parameters into equation \eqref{gm} and solving the resultant differential equations, we have the solution (also termed as time response function) expressed as
\begin{equation}\label{gm.ini}
	\hat{\bm{{y}}}(t) = \bm{F}(\bm{\eta}_y,{\hat{\Xi}};t)
\end{equation}
where $\hat{\bm{y}}(t_1) = \bm{\eta}_y \in \mathbb{R}^d$ is unknown initial condition.
Note, the closed form solutions of equation \eqref{gm} are often not available due to the nonlinearity.
We can alternatively calculate the numerical solutions via numerical simulation scheme such as the Runge-Kutta algorithm \citep{mattheij2002ordinary}. It is obvious that $\bm{\eta}_y$ is crucial to the solution \eqref{gm.ini}, and three popular initial condition selection strategies are as follows:
\begin{enumerate}
	\item[(1)] the fixing first point strategy: $\bm{\hat{\eta}}_y$ is the solution to ${\bm{y}}(t_1) = \hat{\bm{{y}}}(t_1) = \bm{F}(\bm{\eta}_y,{\hat{\Xi}};t_1)$;
	\item[(2)] the fixing last point strategy: $\bm{\hat{\eta}}_y$ is the solution to ${\bm{y}}(t_n) = \hat{\bm{{y}}}(t_n) = \bm{F}(\bm{\eta}_y,{\hat{\Xi}};t_n)$;
	\item[(3)] the residual error correction strategy:
	 $\bm{\hat \eta}_y  = \mathop{\arg\min}_{\bm{\eta}_y}  \left\{
	 \sum_{i=1}^{n}\left\| \bm{F}(\bm{\eta}_y,{\hat{\Xi}};t_i)-\bm{y}(t_i) \right\|_2^2 \right\}$.
\end{enumerate}

Finally, substituting time points $\left\lbrace t_i \right\rbrace^{n+r}_{i=1} $ into equation \eqref{gm.ini} gives the fitting and forecasting values $\hat{\bm{y}}(t)$  of Cusum series and subsequently, by applying the inverse Cusum operator, the forecasts corresponding to the original time series are given through
\[
\hat{\bm{x}}(t_k) =
\begin{cases}
	\hat{\bm{y}}(t_1), & k =1,\\
	\frac{1}{h_k} \left(\hat{\bm{y}}(t_k)-\hat{\bm{y}}(t_{k-1}) \right) & k =2,3,\cdots,r.
\end{cases}
\]

To conclude, we present a unified representation for nonlinear grey system models and in the following, we further discuss the backwards compatibility of this unified, including the single-output and multi-output grey nonlinear models.

\begin{remark}\label{rm1}
	Let the dimension of the system be $d=1$.
	The unified model \eqref{gm} yields multiple families of single-output nonlinear grey system models with different $\bm{N}(y)$.
	\begin{enumerate}
		\item[(1)] Considering that  $\bm{N}(y)$ consists of polynomial term, the differential equation and the corresponding discrete-time equation can be written as
		\[
		\frac{d}{dt}y(t) = ay(t) + \sum_{\iota=1}^{p+1} b_{\iota} \left( y(t)\right) ^{\iota} + \beta
		\]
		and
		\[
		x(t_k) = a\frac{y(t_{k-1})+y(t_k)}{2} +
			\sum_{\iota=1}^{p+1} b_{\iota} \left( \frac{y(t_{k-1})+y(t_k)}{2}\right) ^{\iota} + \beta + \epsilon(t_k).
		\]

		\item[(2)] Considering that $\bm{N}(y)$ consists of power form, the differential equation and the corresponding discrete-time equation can be written as
		\[
		\frac{d}{dt}y(t) =  ay(t) + b\left( y(t)\right) ^\gamma + \beta
		\]
		and
		\[
		x(t_k) = a\frac{y(t_{k-1})+y(t_k)}{2} + b\left( \frac{y(t_{k-1})+y(t_k)}{2}\right) ^\gamma + \beta +\epsilon(t_k).
		\]
	\end{enumerate}
\end{remark}

Remark \ref{rm1} demonstrates that the unified model subsumes a number of nonlinear grey systems model families, although we only give two scenarios here.
Furthermore, the polynomial and power families cover all the single-variable models in Table \ref{review}. In practice, particularly, the power model takes $\gamma$ as a hyper-parameter which can be determined by optimization methods, such as particle swarm algorithm \cite{ding2021novel} and line search \cite{yang_integral_2021}.

\begin{remark}\label{rm2}
	Let the dimension of the state vector be $d \ge 2$.
	Similar to the extensions in Remark \ref{rm1}, if $\bm{N}(\bm{y})$ consists of polynomial term $\begin{bmatrix}
		\bm{y}^{P_2}&\bm{y}^{P_3}~\cdots
	\end{bmatrix}$, 
	the unified model yields a number of multi-output nonlinear grey models.
	Here, higher polynomials are denoted as $\bm{y}^{P_2},~\bm{y}^{P_3}$, etc., where $\bm{y}^{P_2}$ denotes the quadratic nonlinearities in $\bm{y}$: 
	$\begin{bmatrix}
		y_1^2 & y_1y_2 & \cdots & y_1y_d & y_2^2  & \cdots & y_d^2
	\end{bmatrix}$.
	
    Among them is the most simple form having the continuous- and corresponding discrete-time equation expressed as
	\[
	\begin{cases}
		\frac{d}{dt}y_1(t) = y_1(t)\left(a_{11}+\sum\limits_{i=1}^{d}b_{1i}y_i(t) \right) \\
		\frac{d}{dt}y_2(t) = y_2(t)\left(a_{21}+\sum\limits_{i=1}^{d}b_{2i}y_i(t) \right) \\
		~~~~~~~~\vdots \\
		\frac{d}{dt}y_d(t) = y_d(t)\left(a_{d1}+\sum\limits_{i=1}^{d}b_{di}y_i(t) \right)
	\end{cases}
	and~~
	\begin{cases}
		x_1(t_k) = z_1(t)\left( a_{11}+\sum\limits_{i=1}^{d}b_{1i} z_i(t) \right) \\
		x_2(t_k) = z_2(t)\left( a_{21}+\sum\limits_{i=1}^{d}b_{2i} z_i(t) \right) \\
		~~~~~~~~\vdots \\
		x_d(t_k) = z_d(t)\left( a_{d1}+\sum\limits_{i=1}^{d}b_{di} z_i(t) \right) \\
	\end{cases}
	\]
	where $z_i(t_k) = \frac{1}{2}y_i(t_{k-1})+\frac{1}{2}y_i(t_{k}),~i=1,2,\cdots,d$.
\end{remark}

Remark \ref{rm2} shows that the unified model can yield the high-dimensional system that are employed to describe interactions between species and, additionally, if $d=2$, then the unified model is actually the classical grey Lotka-Volterra model \citep{gatabazi_grey_2019}.
Furthermore, similar to the expansions in Remark \ref{rm1}, the unified model could deduce some other novel multi-output models.

\begin{remark}\label{rm3}
	If there exist forcing terms, then the unified model can be extended to
	\begin{equation}\label{gm.u}
	\frac{d}{dt}\bm{y}(t) =
		\bm{\theta}_{\text{L}}\bm{y}(t) + \bm{\theta}_{\text{N}}\bm{N}(\bm{y}(t)) + \bm{\theta}_\mathsf{F}\bm{u}(t) + \bm{\beta},
		~
	\bm{y}(t_1) = \bm{\eta}_y
	\end{equation}
	where $\bm{u}(t)\in \mathbb{R}^\ell$ is a known input vector which is independent of $\bm{y}(t)$, $\bm{\eta}_y$ is unknown initial value, $\bm{\theta}_{\text{L}}\in \mathbb{R}^{d\times d},~\bm{\theta}_{\text{N}}\in \mathbb{R}^{d\times p},~\bm{\theta}_\mathsf{F}\in \mathbb{R}^{d\times\ell}$ and $\bm{\beta}\in \mathbb{R}^{d}$ are structural parameters.

	In light of equation \eqref{gm.disc}, we obtain the corresponding discrete-time equation
	\[
	\bm{x}(t_k) \approx \bm{\theta}_{\text{L}} \frac{\bm{y}(t_k) + \bm{y}(t_{k-1})}{2}
	+ \bm{\theta}_{\text{N}} \bm{N} \left( \frac{\bm{y}(t_k) + \bm{y}(t_{k-1})}{2} \right)
	+ \bm{\theta}_\mathsf{F} \frac{\bm{u}(t_k) + \bm{u}(t_{k-1})}{2}
	+ \bm{\beta} + \bm{\epsilon}(k).
	\]
\end{remark}

Remark \ref{rm3} gives an extension principle of the unified nonlinear grey system model when introducing forcing terms to nonlinear grey models.
For instance, considering the single-output scenario, if $\bm{u}(t)$ consists of other state variables, then we can obtain a number of multi-variable models, including the NGBMC(1,n) and KRGBM(1,n) in Table \ref{review}.

Remarks \ref{rm1}-\ref{rm3} show that the unified model has the capacity to represent classical grey nonlinear models, including but not limited to the existing single-output, and multi-output models.
The extension of the above parameter estimation procedures to equation \eqref{gm.u} is straightforward and tedious. In order to be focused, we only discuss the model \eqref{gm} in the following.

\section{Reconstruction of nonlinear grey models with an integro-differential equation}\label{sec4}

In this section, we simplify the unified nonlinear grey model into a reduced-order integro-differential equation;
then, by using the integral matching approach which consists of an integral operator and the least squares \citep{dattner2017modelling}, we estimate the structural parameters and initial conditions simultaneously; further, we compare the modelling procedures of nonlinear grey models with those of the integro-differential equation-based ones.

\subsection{Reduced-order integro-differential equation model}
\begin{theorem}\label{lem1}
	Let
	\begin{equation} \label{im}
		\bm{y}(t) = \bm{\eta}_y + \int_{t_1}^{t}\bm{x}(\tau)d\tau,~t\ge t_1
	\end{equation}
	where $\bm{\eta}_y \in \mathbb{R}^d$ is a real vector,
	the aforementioned unified representation \eqref{gm} is equivalent to an integro-differential equation expressed as
	\begin{equation} \label{im.red}
	\frac{d}{dt}\bm{x}(t) =
		\bm{\theta}_{\text{L}}\bm{x}(t) + \bm{\theta}_{\text{N}}\bm{x}(t)  \frac{d}{dt}\bm{N}\left(\bm{\eta}_y+\int_{t_1}^{t}\bm{x}(\tau)d\tau\right),
				~\bm{x}(t_1)=\bm{\eta}_x,~t\ge t_1
	\end{equation}
	where $\bm{\eta}_x = \bm{\theta}_{\text{L}}\bm{\eta}_y + \bm{\theta}_{\text{N}}\bm{N}(\bm{\eta}_y) + \bm{\beta}$.
\end{theorem}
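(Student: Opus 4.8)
The plan is to show that \eqref{gm} and \eqref{im.red} have the same solutions by passing between them with the fundamental theorem of calculus and a single differentiation, while keeping careful track of how the initial data transform. I would work under the implicit standing assumptions that $\bm{x}$ is continuous on $[t_1,\infty)$ and that the nonlinearity $\bm{N}$ is $C^{1}$, and I would read the claimed equivalence as a statement about parametrised initial value problems, with $\bm{\eta}_y\in\mathbb{R}^{d}$ fixed once and for all. The elementary fact underlying \eqref{im} is that $\bm{y}(t):=\bm{\eta}_y+\int_{t_1}^{t}\bm{x}(\tau)\,d\tau$ is then $C^{1}$ with $\dot{\bm{y}}(t)=\bm{x}(t)$ and $\bm{y}(t_1)=\bm{\eta}_y$, so \eqref{im} is just the assertion that $\bm{y}$ solves the trivial problem $\dot{\bm{y}}=\bm{x}$, $\bm{y}(t_1)=\bm{\eta}_y$.

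\emph{From \eqref{gm} to \eqref{im.red}.} Suppose $\bm{y}$ solves \eqref{gm} and set $\bm{x}:=\dot{\bm{y}}$. Then \eqref{gm} becomes the pointwise algebraic identity
\[
\bm{x}(t)=\bm{\theta}_{\text{L}}\,\bm{y}(t)+\bm{\theta}_{\text{N}}\,\bm{N}(\bm{y}(t))+\bm{\beta},\qquad t\ge t_1 .
\]
Evaluating at $t=t_1$ gives $\bm{x}(t_1)=\bm{\theta}_{\text{L}}\bm{\eta}_y+\bm{\theta}_{\text{N}}\bm{N}(\bm{\eta}_y)+\bm{\beta}$, i.e.\ the prescribed $\bm{\eta}_x$. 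Since $\bm{N}\in C^{1}$ and $\bm{y}\in C^{1}$, the right-hand side above is $C^{1}$, hence so is $\bm{x}$; differentiating the identity in $t$, applying the chain rule to $t\mapsto\bm{N}(\bm{y}(t))$, and substituting $\dot{\bm{y}}=\bm{x}$ and $\bm{y}(t)=\bm{\eta}_y+\int_{t_1}^{t}\bm{x}(\tau)\,d\tau$ yields \eqref{im.red} with initial value $\bm{\eta}_x$.

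\emph{From \eqref{im.red} to \eqref{gm}.} Conversely, let $\bm{x}$ solve \eqref{im.red} with $\bm{x}(t_1)=\bm{\eta}_x=\bm{\theta}_{\text{L}}\bm{\eta}_y+\bm{\theta}_{\text{N}}\bm{N}(\bm{\eta}_y)+\bm{\beta}$ and define $\bm{y}$ by \eqref{im}, so $\dot{\bm{y}}=\bm{x}$. Consider the defect
\[
\bm{g}(t):=\bm{x}(t)-\bm{\theta}_{\text{L}}\bm{y}(t)-\bm{\theta}_{\text{N}}\bm{N}(\bm{y}(t))-\bm{\beta},
\]
which measures the failure of \eqref{gm}. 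The coupling of the initial conditions gives $\bm{g}(t_1)=\bm{0}$, and differentiating $\bm{g}$ and invoking \eqref{im.red} gives $\dot{\bm{g}}\equiv\bm{0}$; hence $\bm{g}\equiv\bm{0}$ on $[t_1,\infty)$, which is \eqref{gm}. The two implications together give a bijection between the solutions of \eqref{gm} with initial value $\bm{\eta}_y$ and the solutions of \eqref{im.red} with the same $\bm{\eta}_y$ and the corresponding $\bm{\eta}_x$, which is the asserted equivalence.

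The two differentiations are routine; the step I expect to need the most care --- and the real obstacle to a clean statement --- is the bookkeeping of the initial data. The constant of integration $\bm{\eta}_y$ introduced by \eqref{im} is \emph{not} a free parameter once one moves to \eqref{im.red}: it reappears inside $\bm{N}\!\left(\bm{\eta}_y+\int_{t_1}^{t}\bm{x}(\tau)\,d\tau\right)$ and it also determines $\bm{\eta}_x$ through $\bm{\eta}_x=\bm{\theta}_{\text{L}}\bm{\eta}_y+\bm{\theta}_{\text{N}}\bm{N}(\bm{\eta}_y)+\bm{\beta}$, so the reduction of order trades the additive constant $\bm{\beta}$ together with the unknown $\bm{\eta}_y$ of \eqref{gm} for the single unknown $\bm{\eta}_x$ of \eqref{im.red}. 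I would also double-check the exact form of the nonlinear term on the right of \eqref{im.red}, namely that $\bm{\theta}_{\text{N}}$ acts on $\frac{d}{dt}\bm{N}\!\left(\bm{\eta}_y+\int_{t_1}^{t}\bm{x}(\tau)\,d\tau\right)$ as produced by the chain rule, since it is precisely this term that must cancel in $\dot{\bm{g}}$ for the converse to close; and I would make the standing regularity hypotheses ($\bm{x}$ continuous, $\bm{N}\in C^{1}$) explicit, since they license the two differentiations.
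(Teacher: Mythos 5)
Your proof is correct and follows essentially the same route as the paper: the forward direction differentiates the algebraic identity $\bm{x}=\bm{\theta}_{\text{L}}\bm{y}+\bm{\theta}_{\text{N}}\bm{N}(\bm{y})+\bm{\beta}$ via the chain rule, and your converse via the defect $\bm{g}$ with $\bm{g}(t_1)=\bm{0}$, $\dot{\bm{g}}\equiv\bm{0}$ is just the differential repackaging of the paper's integration-plus-Newton--Leibniz argument. Your remarks on regularity, on the bookkeeping of $\bm{\eta}_y$ versus $\bm{\eta}_x$, and on reading $\bm{\theta}_{\text{N}}\bm{x}(t)\frac{d}{dt}\bm{N}(\cdot)$ as the chain-rule expansion of $\bm{\theta}_{\text{N}}\frac{d}{dt}\bm{N}(\bm{y}(t))$ are all apt and, if anything, make the statement more precise than the paper's own proof.
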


\begin{proof}
	To begin with the necessity, that is, equation \eqref{gm} can be reduced to equation \eqref{im.red}.
	Substituting  equation \eqref{im} into equation \eqref{gm} gives
	\begin{equation}\notag
		\bm{x}(t) = \bm{\theta}_{\text{L}}\bm{x}(t) +
			\bm{\theta}_{\text{N}} \bm{N}\left(\bm{\eta}_y+\int_{t_1}^{t}\bm{x}(\tau)d\tau\right) + \bm{\beta}
	\end{equation}
	where, differentiating both sides with respect to $t$ and using the chain rule gives equation \eqref{im.red}.
    Then, by combining equations \eqref{gm} and \eqref{im}, the initial value can be obtained as
	\[
	\bm{\eta}_x = \frac{d}{dt}\bm{y}(t) \Big|_{t=1} =
		\bm{\theta}_{\text{L}}\bm{\eta}_y + \bm{\theta}_{\text{N}}\bm{N}(\bm{\eta}_y) + \bm{\beta}.
	\]

	Conversely, integrating both sides of equation \eqref{im.red} with respect $t$, we have, in light of equation \eqref{im},
	\[
	\int_{t_1}^{t} d\bm{x}(\tau) =
		\bm{\theta}_{\text{L}}\int_{t_1}^{t}\bm{x}(\tau)d\tau +
		\bm{\theta}_{\text{N}} \int_{t_1}^{t} \bm{x}(\tau) \frac{d}{d\tau}\bm{N}
		\left(\bm{\eta}_y + \int_{t_1}^{\tau} \bm{x}(s) ds \right) d\tau.
	\]

	Manipulating the second term at the right-hand side through the method of integration by substitution yields
	\begin{align} \notag
	\int_{t_1}^{t}d\bm{x}(\tau)
        = \bm{\theta}_{\text{L}}\int_{t_1}^{t}\bm{x}(\tau)d\tau +
			\bm{\theta}_{\text{N}} \int_{t_1}^{t} \frac{d}{d\tau}\bm{N}\left(\bm{y}(\tau) \right) d \bm{y}(\tau)
		= \bm{\theta}_{\text{L}}\int_{t_1}^{t}\bm{x}(\tau)d\tau +
			\bm{\theta}_{\text{N}} \int_{t_1}^{t} d \bm{N}\left(\bm{y}(\tau) \right)
	\end{align}
	then, using the Newton-Leibniz formula leads to
    \begin{equation}\label{lem.3}
   	\begin{split}
    	\bm{x}(t)
            & = \bm{\theta}_{\text{L}} \left(\bm{y}(t)-\bm{\eta}_y \right) +
    		 \bm{\theta}_{\text{N}}\left( \bm{N}(\bm{y}(t)) -
    		 \bm{N}(\bm{\eta}_y) \right) + \bm{\eta}_x \\
    		 &= \bm{\theta}_{\text{L}} \bm{y}(t) + \bm{\theta}_{\text{N}} \bm{N}(\bm{y}(t)) +
            \bm{\eta}_x - \bm{\theta}_{\text{L}} \bm{\eta}_y - \bm{\theta}_{\text{N}} \bm{N}(\bm{\eta}_y)
    	\end{split}
    \end{equation}
	where, particularly, the left-hand side is
	\[
	\bm{x}(t) = \frac{d}{dt}\left(\bm{\eta}_y + \int_{t_1}^{t}\bm{x}(s)ds \right) =\frac{d}{dt}\bm{y}(t)
	\]
    and substituting the initial condition $\bm{\eta}_x =
	\bm{\theta}_{\text{L}}\bm{\eta}_y + \bm{\theta}_{\text{N}}\bm{N}(\bm{\eta}_y) + \bm{\beta}$ into equation \eqref{lem.3}, we obtain the desired equation \eqref{gm}.
\end{proof}

Theorem \ref{lem1} shows that for a given nonlinear grey equation \eqref{gm}, we can always find an equivalent reduced-order integro-differential equation \eqref{im.red}.
Conversely, equation \eqref{im.red} can be integrated into the unified model \eqref{gm} with the initial value $\bm{\eta}_y$ satisfying $\bm{\eta}_x =
\bm{\theta}_{\text{L}}\bm{\eta}_y + \bm{\theta}_{\text{N}}\bm{N}(\bm{\eta}_y) + \bm{\beta}$.
Note that the constant vector $\bm{\beta}$ provides $d$ degrees of freedom corresponding the initial value $\bm{\eta}_y$.
Let us illustrate it with an example.

\begin{example}
	Supposing a single-output nonlinear grey model (the $\text{GGVM(1,1)}$ model) is
	\begin{align}\label{ex.ver}
		\frac{d}{dt} y(t) = ay(t) + b\left( y(t) \right)^2, ~y(t_1) = \eta_y, ~t\ge 0
	\end{align}
	then the equivalent reduced-order integro-differential equation is
	\begin{align}\label{ex.ver.im}
		\frac{d}{dt}x(t) = ax(t) +2bx(t)\left(\eta_y +\int_{t_1}^{t}x(\tau)d\tau \right),   ~x(t_1)=\eta_x, ~t\ge 0.
	\end{align}
	where the initial value satisfy $\eta_x = a\eta_y + b{\eta_y}^2$. It is easy to verify that their closed form solutions satisfy equation \eqref{im}, that is, the closed form solutions of equations \eqref{ex.ver} and \eqref{ex.ver.im} are
	\[
	y(t) = \left[-\frac{b}{a} + e^{-a(t - t_1)} \cdot \left( \frac{1}{\eta_y} + \frac{b}{a} \right)  \right] ^{-1}
    ~
	\text{and}
	~
	x(t) = \frac{e^{-a (t-t_1)}\left(\frac{b}{a}  + \frac{1}{\eta} \right) }
		{\left[ \frac{b}{a} - e^{-a (t-t_1)} \left( \frac{b}{a}  + \frac{1}{\eta} \right)  \right]^2 },
    ~
    \text{respectively}.
	\]
\end{example}

Specially, if $\bm{\eta}_x=\bm{\eta}_y=\bm{\eta}$, the traditional Cusum operator is the left rectangle approximation of equation \eqref{im}, which has been proved in the literature \citep{wei2020unified,yang_integral_2021}.
In this sense, the integral operator can be regarded as a continuous-time generalisation of the Cusum operator.
In such a case, for a given integro-differential equation \eqref{im.red}, we can always find an equivalent nonlinear grey system model \eqref{gm} with initial value $\bm{y}(t_1) = \bm{\eta}_x$.


\subsection{Integral matching for estimating parameters and initial values simultaneously} \label{4.2}

The reduced-order integro-differential equation concerns the state variable rather than the Cusum state, making the direct modelling possible. By using a state-space form, the reconstructed model can be expressed as
\begin{align}
	\textsf{Observation equation} ~~
        &{\bm{x}}(t_k) = \bm{s}(t_k) + \bm{e}(k), k = 1,2,\cdots,n \label{obs}\\
	\textsf{State equation} ~~
        &\frac{d}{dt}\bm{s}(t) =
	\bm{\theta}_{\text{L}}\bm{s}(t) + \bm{\theta}_{\text{N}}\bm{s}(t)  \frac{d}{dt}\bm{N}\left(\bm{\zeta}+\int_{t_1}^{t}\bm{s}(\tau)d\tau\right),
	~\bm{s}(t_1)=\bm{\eta},~t\ge t_1
	 \label{regm}
\end{align}
where
$\bm{x}(t_k)$ is the observation,
$\bm{e}(k)$ is the measurement noise,
$\bm{s}(t)\in\mathbb{R}^d$ is the state variable,
$\bm{N}(\cdot):\mathbb{R}^d \rightarrow \mathbb{R}^p$ is a $p$-dimensional nonlinear vector function of the integral operator,
$\bm{\theta}_{\text{L}}\in \mathbb{R}^{d \times d}$ is the unknown parameters of linear term,
$\bm{\theta}_{\text{N}}\in \mathbb{R}^{d \times p}$ is unknown structural parameter of nonlinear term,
$\bm{\zeta}\in \mathbb{R}^{d}$ is an unknown constant parameter,
and $\bm{\eta}\in\mathbb{R}^{d}$ is the unknown initial condition.

It is obvious that equation \eqref{regm} is nonlinear in $\bm{\zeta}$, so that this parameter estimation problem falls in the general class of nonlinear least square problem (NLS).
Recall that Theorem \ref{lem1} supports the flexibility selection of $\bm{\zeta}$.
Therefore, for ease of calculation, we take $\bm{\zeta}=\bm{\eta}$, leading to a new form
\begin{equation}\label{restate}
	\frac{d}{dt}\bm{s}(t) =
		\bm{\theta}_{\text{L}}\bm{s}(t) + \bm{\theta}_{\text{N}}\bm{s}(t)  \frac{d}{dt}\bm{N}\left(\bm{\eta}+\int_{t_1}^{t}\bm{s}(\tau)d\tau\right),
		~\bm{s}(t_1)=\bm{\eta},~t\ge t_1.
\end{equation}

In what will follow, the integral matching approach (also called the direct integral method), is utilized to estimate the unknown structural parameters and initial value simultaneously.

Integrating equation \eqref{restate} with respect $t$ over the interval $[t_1,~t_k]$, and combining Theorem \ref{lem1}, we have
\begin{equation}\label{reim}
	\bm{s}(t_k) = \bm{\theta}_{\text{L}}\int_{t_1}^{t_k}\bm{s}(\tau)d\tau +
		\bm{\theta}_{\text{N}}\left[\bm{N}\left(\bm{\eta}+\int_{t_1}^{t_k}\bm{s}(\tau)d\tau \right) - \bm{N}(\bm{\eta}) \right] + \bm{\eta}
\end{equation}
where, by using piecewise linear integration formula, the integral term $\int_{t_1}^{t_k}\bm{s}(\tau)d\tau$ is approximated as
\[
\tilde{\bm{s}}(t_k) \approx \frac{1}{2}\sum_{i=2}^{k}h_i\bm{s}(t_{i-1}) + \frac{1}{2} \sum_{i=2}^{k}h_i\bm{s}(t_{i}).
\]

Due to the state variable $\bm{s}(t)$ is not available, we use the noisy observation $\bm{x}(t)$ instead, then
\begin{equation}\label{int.proxi}
	\tilde{\bm{x}}(t_k) \approx \frac{1}{2}\sum_{i=2}^{k}h_i\bm{x}(t_{i-1}) + \frac{1}{2} \sum_{i=2}^{k}h_i\bm{x}(t_{i}).
\end{equation}

Correspondingly, the pseudo-nonlinear regression problem becomes
\begin{equation}\label{im.dis}
	\bm{x}(t_k) = \bm{\theta}_{\text{L}}\tilde{\bm{x}}(t_k) +
			\bm{\theta}_{\text{N}}\left[ \bm{N}(\bm{\eta}+\tilde{\bm{x}}(t_k))-\bm{N}(\bm{\eta})\right]
			+ \bm{\eta} + \bm{e}(t_k)
\end{equation}
where $\bm{e}(t_k)$ is the sum of discretization error and measurement noise.
Then, we utilize the change of basis to show how to derive this nonlienar least squares problem to a linear version.

\begin{lemma}\label{lem2}	
	Suppose that $\bm{N}(\cdot)$ consists of polynomial terms. Depending on whether there exist interaction terms or not we get the following alternatives. 
	
	\begin{enumerate}
		\item 
		If $d=1$ and ${N}(x)=\left[x^2,~ x^3,~ \cdots,~ x^{p+1}\right]^\top$, then 
		\begin{align}\label{im2}
			{\theta}_{\text{L}} \tilde{{x}}(t_k) + 
			\bm{\theta}_{\text{N}} \left[ \bm{N}(\bm{\eta}+\tilde{\bm{x}}(t_k))-\bm{N}(\bm{\eta})\right] + 
			{\eta} 
			= 
			\left( {\theta}_{\text{L}} + {\theta}_{\text{L}}\bm{\phi} \right) \tilde{{x}}(t_k) +
			\left( \bm{\theta}_{\text{N}}\bm{\varphi} \right) 
			\bm{N}(\tilde{\bm{x}}(t_k)) +
			\eta
		\end{align}
		where 
		\[
		\bm{\phi} = \begin{bmatrix}
			\binom{2}{1} \eta^{1} \\
			\vdots \\
			\binom{p+1}{1}\eta^{p}
		\end{bmatrix},
		~
		\bm{\varphi} = \begin{bmatrix}
			\binom{2}{2} \eta^{0} & & \\
			\vdots & \ddots & \\
			\binom{p+1}{2} \eta^{p-1} & \cdots & \binom{p+1}{p+1} \eta^{0}
		\end{bmatrix}.
	\]
	
		\item 
		If $d\geq 2$ and $\bm{N}(\bm{x})=\left[x_1^2,~ x_1x_2,~ \cdots,x_1x_d,~x_2^2,\cdots,~ x_d^2\right]^\top$, then
		\begin{align}\label{im3}
			\bm{\theta}_{\text{L}} \tilde{\bm{x}}(t_k) + 
			\bm{\theta}_{\text{N}} \left[ \bm{N}(\bm{\eta}+\tilde{\bm{x}}(t_k))-\bm{N}(\bm{\eta})\right] + 
			\bm{\eta} = 
			\left( \bm{\theta}_{\text{L}}\bm{I}_d + \bm{\theta}_{\text{N}}\bm{\psi} \right) \tilde{\bm{x}}(t_k) +
			\bm{\theta}_{\text{N}} \bm{N}(\tilde{\bm{x}}(t_k)) +
			\bm{\eta}
		\end{align}
		where $\bm{\psi}=\begin{bmatrix}
			\psi_d ^\top & \psi_{d-1} ^\top & \cdots & \psi_1 ^\top
		\end{bmatrix}^\top$
		for
		\[
		\psi_{d-i} =  
		\underset{i \text{ columns}}{\underbrace{
				\left[ 
				\begin{array}{cccc}
					0 		& \cdots & 0		\\
					0 		& \cdots & 0		\\
					\vdots  & \ddots & \vdots		\\
					0		& \cdots & 0		\\
				\end{array}
				\right.
		}}
		\;
		\underset{d-i \text{ columns}}{\underbrace{
				\left.
				\begin{array}{cccc}
					2\eta_{i+1}		&				&		&	0	\\
					\eta_{i+2}		&\eta_{i+1}		&		&		\\
					\vdots 			&				& \ddots&		\\
					\eta_d 			&0				&		&	\eta_{i+1}	\\
				\end{array}
				\right]
		}},
		~
		i = 0,~1,\cdots,~d-1.
		\]
	\end{enumerate}
\end{lemma}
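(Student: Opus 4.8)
The plan is to argue directly from the pseudo-regression identity \eqref{im.dis}: the only way the unknown parameters enter nonlinearly is through the shifted argument $\bm{\eta}+\tilde{\bm{x}}(t_k)$ sitting inside $\bm{N}(\cdot)$, so it suffices to show that $\bm{N}\!\left(\bm{\eta}+\tilde{\bm{x}}(t_k)\right)-\bm{N}(\bm{\eta})$ decomposes, for polynomial $\bm{N}$, into a part that is linear in $\tilde{\bm{x}}(t_k)$ and a part that is a fixed linear image of $\bm{N}(\tilde{\bm{x}}(t_k))$, with coefficient matrices depending only on $\bm{\eta}$. Substituting this decomposition into the left-hand sides of \eqref{im2} and \eqref{im3} and grouping the $\tilde{\bm{x}}(t_k)$-terms against the $\bm{N}(\tilde{\bm{x}}(t_k))$-terms then yields exactly the stated right-hand sides; the residual term $\bm{\eta}$ passes through untouched. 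The whole argument is an exercise in reading off coefficients, done separately in the two cases because the combinatorics differ.

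For item (1), with ${N}(x)=[x^2,\dots,x^{p+1}]^\top$, I would take the $m$-th component $(\eta+\tilde x)^{m+1}-\eta^{m+1}$ and apply the binomial theorem. The $j=0$ term is $\eta^{m+1}$ and cancels; the $j=1$ term is $\binom{m+1}{1}\eta^{m}\tilde x$, linear in $\tilde x$; and each term $j=2,\dots,m+1$ equals $\binom{m+1}{j}\eta^{m+1-j}\tilde x^{\,j}$, i.e.\ $\binom{m+1}{j}\eta^{m+1-j}$ times the $(j-1)$-th entry of $N(\tilde x)$. Assembling these coefficients row by row gives $N(\eta+\tilde x)-N(\eta)=\bm{\phi}\,\tilde x+\bm{\varphi}\,N(\tilde x)$ with $\bm{\phi}$ and the lower-triangular $\bm{\varphi}$ exactly as displayed (the diagonal entries of $\bm{\varphi}$ are $\binom{m+1}{m+1}\eta^{0}=1$, which confirms $\bm{\varphi}$ is invertible and hence the original parameters are recoverable). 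Plugging this into ${\theta}_{\text{L}}\tilde x+\bm{\theta}_{\text{N}}[N(\eta+\tilde x)-N(\eta)]+\eta$ and collecting terms produces the effective linear coefficient ${\theta}_{\text{L}}+\bm{\theta}_{\text{N}}\bm{\phi}$ multiplying $\tilde x$ and $\bm{\theta}_{\text{N}}\bm{\varphi}$ multiplying $N(\tilde x)$, as claimed in \eqref{im2}.

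For item (2), with $\bm{N}(\bm{x})$ the stack of all quadratic monomials $x_ix_j$, $i\le j$, I would use bilinearity in place of the binomial theorem: $(\eta_i+\tilde x_i)(\eta_j+\tilde x_j)-\eta_i\eta_j=\eta_j\tilde x_i+\eta_i\tilde x_j+\tilde x_i\tilde x_j$. The cross term $\tilde x_i\tilde x_j$ is precisely the matching entry of $\bm{N}(\tilde{\bm{x}})$, so the nonlinear part reproduces $\bm{N}(\tilde{\bm{x}})$ with coefficient matrix $\bm{I}$ — this is why $\bm{\theta}_{\text{N}}$ is left unchanged here, unlike in item (1) — while the remaining $\eta_j\tilde x_i+\eta_i\tilde x_j$ is linear in $\tilde{\bm{x}}$. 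Laying out these linear coefficients over all monomials in the fixed order $x_1^2,x_1x_2,\dots,x_1x_d,x_2^2,\dots,x_d^2$ yields the block matrix $\bm{\psi}=[\psi_d^\top\ \psi_{d-1}^\top\ \cdots\ \psi_1^\top]^\top$, where $\psi_{d-i}$ collects the rows for the monomials $x_{i+1}^2,x_{i+1}x_{i+2},\dots,x_{i+1}x_d$ (and hence has its first $i$ columns zero, since $x_1,\dots,x_i$ do not occur in those monomials). Substituting $\bm{N}(\bm{\eta}+\tilde{\bm{x}})-\bm{N}(\bm{\eta})=\bm{\psi}\tilde{\bm{x}}+\bm{N}(\tilde{\bm{x}})$ into $\bm{\theta}_{\text{L}}\tilde{\bm{x}}+\bm{\theta}_{\text{N}}[\,\cdot\,]+\bm{\eta}$ and regrouping gives $(\bm{\theta}_{\text{L}}\bm{I}_d+\bm{\theta}_{\text{N}}\bm{\psi})\tilde{\bm{x}}+\bm{\theta}_{\text{N}}\bm{N}(\tilde{\bm{x}})+\bm{\eta}$, which is \eqref{im3}.

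I expect the only real difficulty to be bookkeeping rather than mathematics: verifying that the coefficients produced by the expansion, once written in the monomial ordering fixed by the definition of $\bm{N}$, agree entry-for-entry with the triangular matrix $\bm{\varphi}$ in item (1) and with the staircase block pattern of $\psi_{d-i}$ in item (2) — in particular keeping the ``$i$ zero columns followed by $d-i$ structured columns'' layout aligned with which variables actually appear in each monomial. I would dispatch this by fixing explicit index maps (monomial $\leftrightarrow$ row, variable $\leftrightarrow$ column) and checking a generic entry, rather than enumerating cases.
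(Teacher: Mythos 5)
Your proof is correct and follows essentially the same route as the paper's: binomial expansion of $(\eta+\tilde{x})^{m+1}-\eta^{m+1}$ to obtain the decomposition $\bm{N}(\eta+\tilde{x})-\bm{N}(\eta)=\bm{\phi}\tilde{x}+\bm{\varphi}\bm{N}(\tilde{x})$ in the scalar case, and the bilinear identity $(\eta_i+\tilde{x}_i)(\eta_j+\tilde{x}_j)-\eta_i\eta_j=\eta_j\tilde{x}_i+\eta_i\tilde{x}_j+\tilde{x}_i\tilde{x}_j$ to get $\bm{\psi}\tilde{\bm{x}}+\bm{N}(\tilde{\bm{x}})$ in the quadratic multivariate case, followed by regrouping. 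You also correctly read the effective linear coefficient in item (1) as ${\theta}_{\text{L}}+\bm{\theta}_{\text{N}}\bm{\phi}$ (the dimensionally consistent form that the appendix derivation actually yields), and your observation that $\bm{\varphi}$ has unit diagonal, hence is invertible, usefully justifies the later recovery of the original parameters.
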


Due to space limitations, we provide the technical proof in Appendix \ref{app1}.
Lemma \ref{lem2} shows that by manipulating the nonlinear function vector, $\bm{\eta}$ can be separated, thereby transferring the nonlinear least squares to a simple linear one.
Lemma \ref{lem2} holds for most of the nonlinear grey models, although only the case of quadratic nonlinearities is presented here.
For higher order polynomials and other forms such as trigonometric, Lemma \ref{lem2} provides a reference.

For simplicity of presentation, the transformed parameters at the right-hand side of equations \eqref{im2} and \eqref{im3} are denoted as
$\left[ \bm{\vartheta}_{\text{L}} ~ \bm{\vartheta}_{\text{N}} ~ \bm{\eta}\right]$.
Sequentially, applying Lemma \ref{lem2} to equation \eqref{im.dis}, we can obtain a pseudo-linear regression
\begin{equation}\label{im.tra}
	\bm{x}(t_k) = \bm{\vartheta}_{\text{L}}\tilde{\bm{x}}(t_k) +
		\bm{\vartheta}_{\text{N}}\bm{N}\left( \tilde{\bm{x}}(t_k)\right) + \bm{\eta} + \bm{e}(t_k)
\end{equation}
then, similar to the manipulation in Section \ref{sec2}, one has matrix form
\begin{equation}
	\bm{X} = \Omega(\bm{x}){\Pi} + {\bm{E}}
\end{equation}
where
\[
{\Pi} =
	\begingroup
	\renewcommand*{\arraystretch}{1.2}
	\begin{bmatrix}
		\bm{\vartheta}^\top_{\text{L}} \\
		\bm{\vartheta}^\top_{\text{N}} \\
		\bm{\eta}^\top
	\end{bmatrix}
	\endgroup,
~
\Omega(\bm{x}) =
\begin{bmatrix}
	\tilde{\bm{x}}^\top(t_2)	 &  \bm{N}^\top \left(\tilde{\bm{x}}^\top(t_2)\right)  & 1 \\
	\tilde{\bm{x}}^\top(t_3)	 &  \bm{N}^\top \left(\tilde{\bm{x}}^\top(t_3)\right)  & 1 \\
	\vdots & \vdots & \vdots \\
	\tilde{\bm{x}}^\top(t_n)	 &  \bm{N}^\top \left(\tilde{\bm{x}}^\top(t_n)\right)  & 1 \\
\end{bmatrix},
~
\bm{E} =
\begin{bmatrix}
	\bm{e}^\top(t_2) \\
	\bm{e}^\top(t_3) \\
	\vdots \\
	\bm{e}^\top(t_n)
\end{bmatrix}.
\]

Minimizing the least-squares objective function $\bm{\mathcal{L}}({\Pi}) = \| \bm{X}-\Omega(\bm{x}){\Pi} \|^2_\mathsf{F}$ gives the simultaneous estimates of structural parameters and initial value:
\begin{equation}
	\hat{\Pi} = \left( \Omega^\top(\bm{x}) \Omega(\bm{x}) \right)^{-1} \Omega^\top(\bm{x})\bm{X}.
\end{equation}

Then, according to equation \eqref{im2} or \eqref{im3},  we can obtain the estimates $\begin{bmatrix}
	\hat{\bm{\theta}}_{\text{L}} ~ 
	\hat{\bm{\theta}}_{\text{N}} ~
	\hat{\bm{\eta}}
\end{bmatrix}$ by inversely solving algebraic equations.
Finally, substituting the estimates into the integro-differential equation model \eqref{regm} gives the time response function
\[
\hat{\bm{x}}(t) = \bm{f}\left(\hat{\bm{\eta}},\hat{\bm{\theta}}_{\text{L}},\hat{\bm{\theta}}_{\text{N}};t  \right) 
\]
and then by substituting time points $\left\lbrace t_k \right\rbrace ^{n+r}_{k=1}$, we can obtain the fitting and forecasting values of the original time series $\left\lbrace \hat{\bm{x}}(t_k) \right\rbrace^{n+r}_{k = 1} $ directly.

It should be noticed that if $\bm{N}(\cdot)$ includes a power term, then equations \eqref{im2} and \eqref{im3} do not hold any more. In such a case, an alternative strategy to trade off the model accuracy and computation complexity, is using $\bm{x}(t_1)$ to replace $\bm{\eta}$ when discretizing $\bm{N}\left(\bm{\eta}+\int_{t_1}^{t_k}\bm{s}(\tau)d\tau \right)$ in equation \eqref{reim}, leading to
\[
\bm{x}(t_k) = \bm{\theta}_{\text{L}}\tilde{\bm{x}}(t_k) +
	\bm{\theta}_{\text{N}}\left[ \bm{N}(\bm{x}(t_1)+\tilde{\bm{x}}(t_k))-\bm{N}(\bm{x}(t_1))\right]
	+ \bm{\eta} + \bm{e}(t_k)
\]
and, by performing the linear least squares criterion, it is straightforward to obtained the structural parameters and initial value estimates; see \cite{yang_integral_2021} for a particular example.

\subsection{Comparison between nonlinear grey models and integro-differential equation models}

In order to analyse the relationship between nonlinear grey system model and its integro-differential equation-based reconstruction, we summarise the modelling procedures of both models in Figure \ref{remodel}.

\begin{figure}[htp]
    \centering
    \includegraphics[width=.95\linewidth]{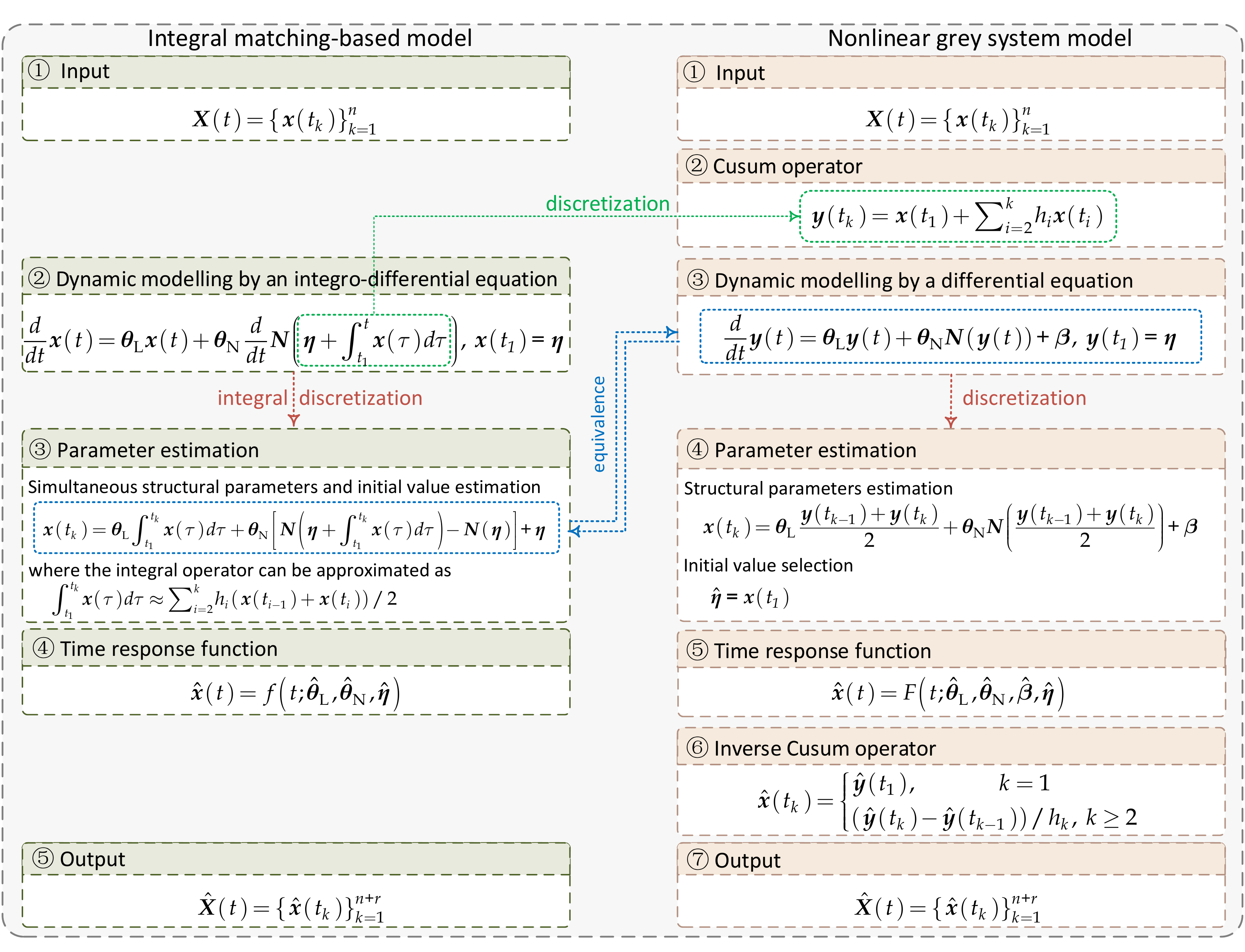}
    \caption{Modelling procedure comparison between nonlinear grey models and integro-differential equation models.}
    \label{remodel}
\end{figure}

It can be seen in Figure \ref{remodel} that the integral operator (or its numerical discretization-based Cusum operator) bridges two models, that is, both use the integral operator (Cusum operator) but in different manner. In particular:

\begin{enumerate}
	\item[(1)] The Cusum operator is the piecewise left-constant quadrature formula of the integral operator. Besides, the existing variations of Cusum operators, such as the fractional cumulative sum operator \citep{wu_forecasting_2019} and new information priority accumulation \cite{wu2018grey}, can be regarded as different numerical discretization forms of the integral operator.
	\item[(2)]The integro-differential equation can be converted to the nonlinear grey model via a simple integral manipulation, indicating that the nature of the nonlinear grey model is an integro-differential equation model.
	\item[(3)] It is clear that the modelling process of nonlinear grey models can be viewed as the reordered version of the integral matching-based integro-differential equation models.
	However, the nonlinear grey modelling requires the inverse cumulative sum step, likely compromising the modelling accuracy.
	\item[(4)] By applying the integral matching approach to integro-differential equation models, the structural parameters and initial value are obtained simultaneously, whereas the nonlinear grey models need an extra initial value selection strategy (see details in section \ref{sec2}).
\end{enumerate}


\section{Simulations}\label{sec5}
In this section we conduct large-scale Monte Carlo experiments to evaluate the finite sample performance of the nonlinear grey system model and it reconstructed one.

Considering the reduced integro-differential equation of the nonlinear grey Verhulst model with $d=1$, the time series are generated from the state space equation:
\[
\begin{split}
	\textsf{State euqation}~~ &\frac{d}{dt}x(t) = ax(t) + bx(t)\left(\eta +\int_{t_1}^{t}x(\tau)d\tau \right),   ~x(t_1)=\eta, ~t_1 \ge 0 \\
	\textsf{Observation equation}~~ &x(t_k) = s(t_k) + e(k), ~e(k) \sim \mathcal{N}(0,\sigma^2)
\end{split}
\]
where the structural parameters are $a = 1.2, ~b = -1$ and the initial value is $\eta = 0.4$.
According to Theorem \ref{rm1}, it is easy to show the corresponding grey Verhulst model is
\[
\frac{d}{dt} y(t) = ay(t) + \frac{b}{2} \left( y(t) \right)^2, ~y(t_1) = \eta, ~t\ge 0.
\]

\subsection{Experimental fashions and performance criterion}

Since the finite sample performance is affected by the amount of data and measurement noise, they are set to different values.
On the one hand, by varying the data scale, we sample data at every time interval of $h$ in the range of $t \in [0, T]$, thereby generating $n = \left[ \frac{T}{h} \right]  + 1$ samples.
On the other hand, we set different magnitudes of Gaussian noise which are controlled by the noise level:
\begin{equation}\label{eq4.2}
	\textsf{Noise Level} ~(\%)= \frac{\mathsf{var}(\mathrm{Noise})}{\mathsf{var}(\mathrm{Signal})} \times 100\% =
	\frac{\sigma^2}{\mathsf{var}[{s}]} \times 100\%, ~i = 1,2, \cdots, d.
\end{equation}

Specifically, we consider 2 experimental set-ups. Let $T = 4$. In the first set-up, we set the noise level to $10\%$ but change the time interval $h=[0.40, ~0.20, ~0.08, ~0.04]$, thereby generating $n = [11, ~21, ~51, ~101]$ samples.
In the second set-up, we fix the data scale $n=501$ ($h=0.01$) but change the noise level $[10\%, ~15\%, ~20\%, ~25\%]$.
In each case 500 Monte Carlo realizations are repeated.

For ease of comparison, the parameter and initial value estimation performance are depicted by the violin plot synergistically combining the boxplots and the density trace, where the boxplots shows centre, spread, asymmetry and outliers, and the density trace shows the distributional characteristics of batches of data \cite{2001Statistical}.
In addition, the fitting performance is measured by the mean absolute percentage error criteria
\[
\textsf{RMSE}[{x}] = \sqrt{ \frac{1}{n-1} \sum_{k=1}^{n} \left(\hat{x}(t_k) - x(t_k)\right)^2 }.
\]

In the following, the universal modelling framework of the nonlinear grey system is abbreviated as nonlinear grey modelling, and the reconstructed modelling process is abbreviated as integral matching.
Note, the fixed point strategy is employed to select initial conditions for solving the original nonlinear grey model.


\subsection{Performance evaluation under varying data scales}

The estimation of structural parameters and initial values, as well as the fitting errors obtained from integral matching and nonlinear grey modelling are summarised in Figure \ref{ver_nlen}.

\begin{figure}[h]
	\centering
	\includegraphics[width=1\linewidth]{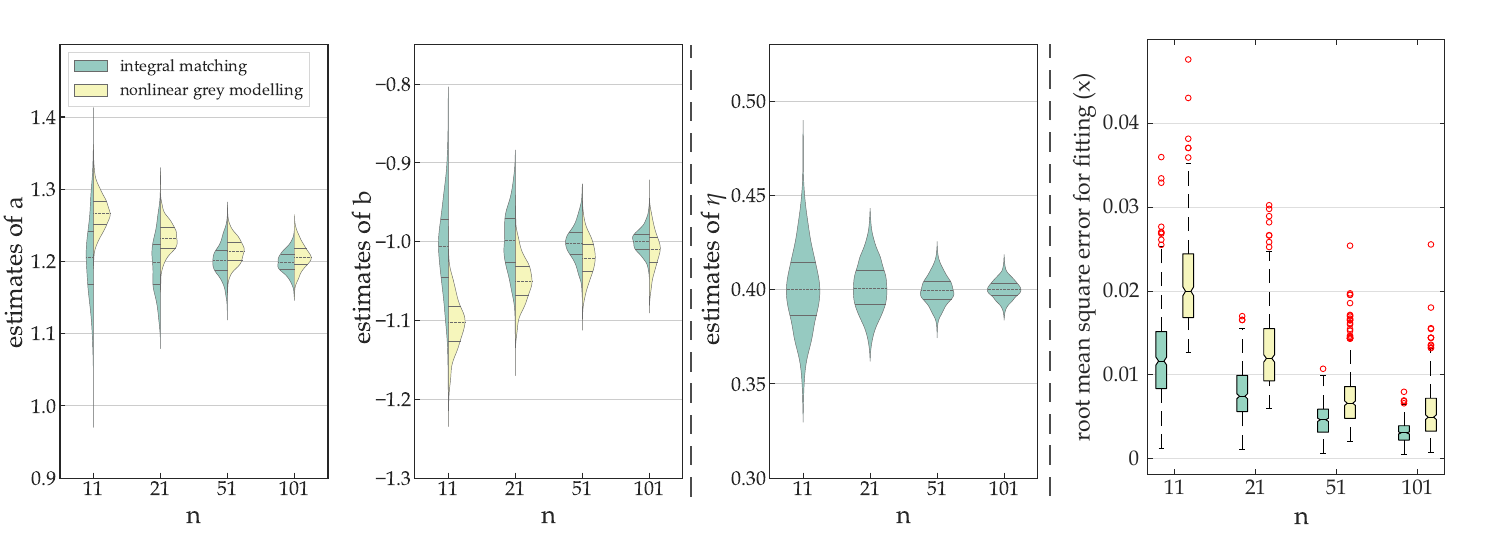}
	\caption{Violin plots of estimated structural parameters, initial value and boxplot of fitting error.
		The true parameters are $a=1.2,~b=-1$ and $\eta=0.4$,
		and the nonlinear grey modelling uses $x(t_1)$ as initial value.
	}
	\label{ver_nlen}
\end{figure}

Overall, based on the results, we can see that integral matching outperforms nonlinear grey modelling, including parameter estimation accuracy and fitting accuracy.
In terms of parameter estimation, with the increasing of sample size, the estimates of both approaches go to the true $a = 1.2, ~b = -1.0$, suggesting the asymptotic property of integral matching and nonlinear grey modelling.
Apparently, $\hat{a}$ and $\hat{b}$ of integral matching tend to have symmetric violin plots, both centring around the true values $ 1.2$ and $-1.0$ in all cases, indicating the likely unbiased estimation of integral matching; whereas the nonlinear grey modelling generates biased estimators even in the case of large data size ($n = 101$) combination.
Meanwhile, when encountered samples are small sized, $n = 11$, the variance (measured by the violin shape) of nonlinear grey modelling is smaller than that of integral matching, suggesting the robustness of the former is slightly better than that of the later.
As $n$ grows, the variance of both methods decreases, but the decrease of nonlinear grey modelling has a smaller rate than integral matching.
As for initial value estimation, the estimates approaches the true value $\eta=0.4$ as the increase of data scale, validating the efficiency of the proposed estimation method.
From the perspective of fitting performance, as the data size increases, the fitting performance of both models ameliorates.
Meanwhile, integral matching has lower RMSEs and less outliers  in all cases, indicating higher fitting accuracy than nonlinear grey modelling.


\subsection{Performance evaluation under varying noise levels}

The above results demonstrate that the large data scale performs best, thus to probe the effect of noise level, we carry on the experiment in the case of $n=101$.
Figure \ref{ver_nois} summarizes the distributions of estimated structural parameters, initial values and fitting errors of integral matching versus nonlinear grey modelling.

\begin{figure}[h]
	\centering
	\includegraphics[width=1\linewidth]{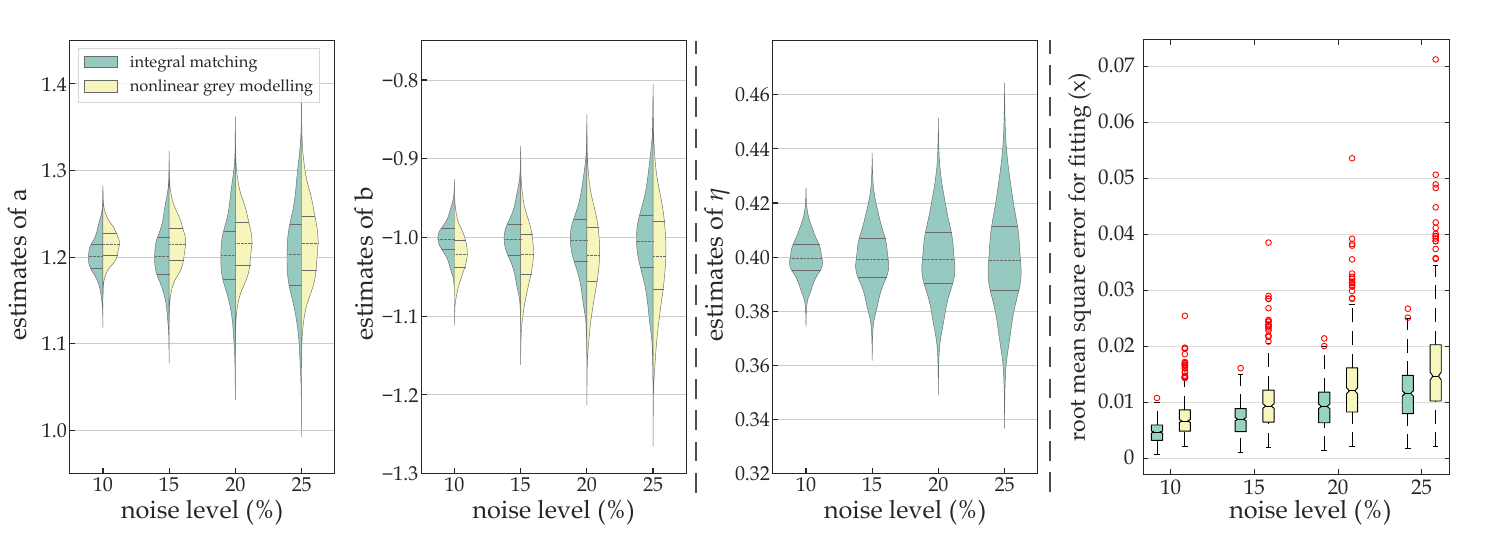}
	\caption{Violin plots of estimated structural parameters, initial value and boxplot of fitting error.
		The true parameters are $a=1.2,~b=-1$ and $\eta=0.4$, and nonlinear grey modelling uses $x(t_1)$ as initial value.}
	\label{ver_nois}
\end{figure}

As a whole, both integral matching and nonlinear grey modelling perform satisfactory robustness to noise in all noise magnitudes, although the noise impacts the parameter estimation and fitting accuracy.
To be specific, from the perspective of parameter estimation accuracy, there only exist slight changes of the medians for both method with the increase of noise level, whereas the distribution of the estimates becomes scattered, indicating the high noise level leads to more outliers.
Meanwhile, the two modelling methods depict similar shapes of violin plots, indicating comparable performance of robustness to noise without a substantial difference.
Similarly, as for initial value distribution, the increasing noise magnitude brings tiny effect on the median of violin plots for integral matching, while the variance increase obviously. 
In addition, in the case of large noise level (25\%), integral matching still remain higher fitting accuracy, suggesting the efficiency of this kind of modelling method.
However, nonlinear grey have some large outliers for the high noise magnitude.


\subsection{Short discussion}

In the single-output scenario, integral matching and nonlinear grey modelling shows several similar results, mainly reflecting in robustness to noise and parameter estimation accuracy for large data size.
This is, however, not a general conclusion.
In the high-dimensional systems, nonlinear grey modelling performs distinctly worse than integral matching, in terms of parameter estimation accuracy, robustness to noise and fitting accuracy.
Take an multi-output system ($d=2$) as an example, we summarize the modelling results in Appendix \ref{append1}.
Particularly, we find that using the first point of the noisy data for solving the original nonlinear grey model leads to blow-up solutions, the reason of which is the loss of accuracy of the numerical method \cite{ascher1998computer}, and thus it cannot deal with measurement noise.
For illustrating experimental results intuitively, we use the true values of the initial condition instead to solve the 2-dimensional nonlinear grey model.

\section{Real-world application} \label{sec6}

Water is vital for human health, industry, agriculture and energy production.
Yet with the rapid expansion of urbanization and growth of population, the overall demand for water and the quantity of wastewater produced are both continuously increasing worldwide, which brings formidable threats to the world's water systems \citep{srinivasan2012nature}.
The Yangtze River Delta (YRD) in China, consisting of Shanghai, Jiangsu, Zhejiang, and Anhui Provinces, is one of the key regions for water shortage and pollution due to dense population, fast economic development and high urbanization.
Figure \ref{water} shows that the municipal sewage discharge and the total amount water use of the YRD are higher than other regions.

Over the past decades, urban sewage has begun to be comprehensively utilized as a potential water resource, which may alleviate the pressure of the ecological environment and speed the development of the regional economy \citep{water20172017}.
Wastewater recovery becomes an efficient measure for water-saving economic patterns.
A scientific prediction of municipal sewage is the basis of urban drainage system design, operation and management.
Therefore, in this paper, by using the proposed reconstructed nonlinear grey models, we predict the municipal sewage and the total amount of water use to probe the water resources development of the YRD.
The data in the period from 2004 to 2018 is collected from \textsf{China Statistic Yearbook on Environment}.

\begin{figure}[h]
	\centering
	\includegraphics[width=1\linewidth]{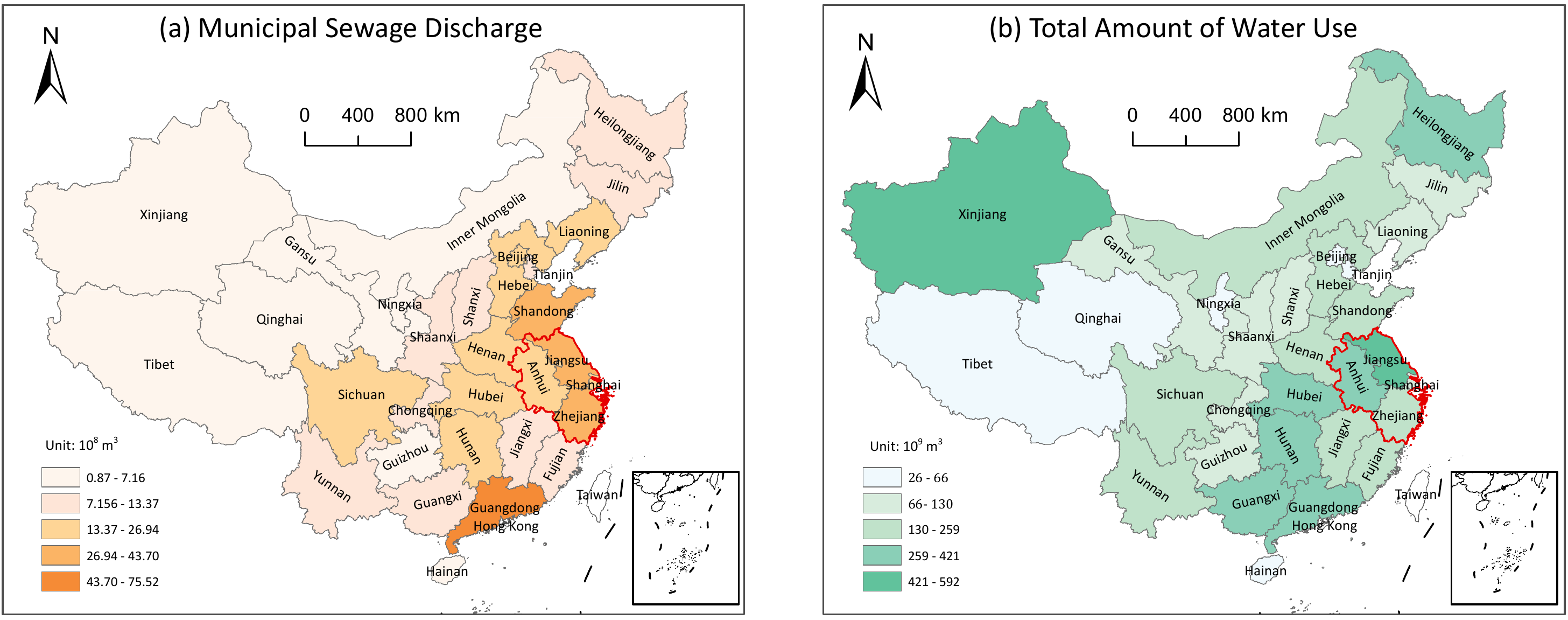}
	\caption{Municipal sewage discharge and total amount of water use of provinces (cities and districts) in 2018.}
	\label{water}
\end{figure}

\subsection{Results of integro-differential equation models}

In order to evaluate the performance of the proposed models, we divide the time series into two parts: 2004-2014 data used to train the models and 2015-2018 data used for assessing the prediction performance.
Then, we proceed three-step ahead forecasting for the municipal sewage discharge and water consumption from 2019-2021 to analyse the post-2018 behaviour and support our suggestions.

The models (\romannumeral1)-(\romannumeral3) in Tables \ref{result_sewage} and \ref{result_consumption} are three reconstructed integro-differential equation models, almost covering the existing single-output nonlinear grey models (shown as Table \ref{review}).
And the three models are correspondingly referred to as IGVM, INGM, and INGBM, respectively.
INGM and INGBM model have an unknown power term $\gamma$, in the following, we utilize line search \citep{nocedal2006numerical} to select an appropriate one.
Assuming $\gamma \in [a, b]$ on the basis of prior knowledge, there exists $N = \frac{b-a}{\lambda} + 1$ possible choice for $\gamma$, where $\lambda$ is the search step length and, here, set $a=0,~b=2$ and $\lambda=0.01$.
For each possibility, we perform the parameter estimation procedures for the real data and choose the best one (measured by forecasting error).
Meanwhile, we use the alternative parameter estimation method for power model families mentioned in subsection \ref{4.2} to address the INGM and INGBM models.

The results obtained from the aforementioned methods are summarized in Tables \ref{result_sewage} and \ref{result_consumption}.
From Table \ref{result_sewage}, we can see that the three integro-differential equation models have similar performance for fitting the municipal sewage discharge data with 2.57\%, 2.58\% and 2.62\% MAPE$_\mathsf{train}$s, respectively.
While the forecasting errors of INGM and INGBM are 0.56\% and 0.87\%, which is lower than the IGVM (MAPE$_\mathsf{test}$=2.63\%), indicating that models INGM and INGBM are superior to IGVM model.
Besides, Table \ref{result_consumption} shows that for the total amount of water use scenario, similarly, there are only small differences in the fitting error among the models (\romannumeral1)-(\romannumeral3).
IGVM has the smallest MAPE$_\mathsf{train}$ for 0.75\%, whereas the testing error is the largest one (4.03\%) among nonlinear grey models.
Apparently, the INGBM shows better forecasting performance than the IGVM and INGM models with MAPE$_\mathsf{train}$ as low as 1.29\%.

To sum up, INGBM performances best for both cases among models (\romannumeral1)-(\romannumeral3).
Therefore, it is probably the best one to use in this application.
In what will follow, we detail the modelling procedures of INGBM model.

Fitting this model on the time series gives the estimated parameters are $a = 0.014,~b = 0.014,~\gamma=1$ and $\eta=77.33$.
The corresponding expression of the INGBM model is
$\frac{d}{dt}x(t) = 0.028x(t)$ with $x(t_1) = 77.33$.

Similarly, for the case of total amount of water use, the estimates are  $a = -0.06, ~b = 2.89,~\gamma=0.63,~\eta=1001.65$, and model expression is
$
\frac{d}{dt}x(t) = -0.06x(t)+2.89x(t)\left( 1001.65 + \int_{t_1}^{t}x(t) \right)^{-0.37}
$ with $x(t_1) = 1001.65$.

\begin{table}[h]
	\begin{threeparttable}
	\centering
	\caption{Training and testing results for municipal sewage discharge obztained from reconstructed nonlinear grey models and comparison methods.}
	\scriptsize
	\newcommand{\tabincell}[2]{\begin{tabular}{@{}#1@{}}#2\end{tabular}}
	\begin{tabular}{llllllllllllll}
		\toprule
		\multirow{2}{*}{Year} & \multirow{2}{*}{ \tabincell{l}{True values \\ ($10^8$ m$^3$)}} & IGVM$^\mathrm{\romannumeral1}$   &       & INGM$^\mathrm{\romannumeral2}$   &       & INGBM$^\mathrm{\romannumeral3}$  &      & ARIMA$^\mathrm{\romannumeral4}$    &       & SVR$^\mathrm{\romannumeral5}$    &      & NNAR$^\mathrm{\romannumeral6}$  &      \\
		\cmidrule(lr){3-4}\cmidrule(lr){5-6}\cmidrule(lr){7-8}\cmidrule(lr){9-10}\cmidrule(lr){11-12}\cmidrule(lr){13-14}
		&                              & Values & APE  & Values & APE & Values & APE & Values & APE  & Values & APE & Values & APE \\
		\midrule
		2004                  & 83.00                                 & 78.85        & 5.00      & 77.92        & 6.12      & 77.33        & 6.84       & --           & --         & --          & --        & --          & --        \\
		2005                  & 85.58                                 & 80.46        & 5.98      & 79.83        & 6.72      & 79.54        & 7.06       & 84.93        & 0.76       & --          & --        & --          & --        \\
		2006                  & 77.89                                 & 82.22        & 5.56      & 81.90        & 5.14      & 81.81        & 5.03       & 87.50        & 12.34      & 90.26       & 15.89     & 87.28       & 12.06     \\
		2007                  & 80.45                                 & 84.13        & 4.59      & 84.09        & 4.53      & 84.15        & 4.60       & 79.81        & 0.79       & 80.22       & 0.28      & 80.86       & 0.51      \\
		2008                  & 87.27                                 & 86.23        & 1.19      & 86.40        & 0.99      & 86.55        & 0.82       & 82.37        & 5.61       & 87.05       & 0.24      & 82.07       & 5.96      \\
		2009                  & 88.11                                 & 88.51        & 0.45      & 88.83        & 0.82      & 89.03        & 1.04       & 89.19        & 1.22       & 88.40       & 0.32      & 88.42       & 0.35      \\
		2010                  & 92.53                                 & 90.99        & 1.66      & 91.38        & 1.25      & 91.57        & 1.04       & 90.04        & 2.70       & 92.35       & 0.20      & 90.09       & 2.64      \\
		2011                  & 95.08                                 & 93.70        & 1.44      & 94.04        & 1.09      & 94.19        & 0.93       & 94.46        & 0.65       & 94.30       & 0.82      & 94.10       & 1.03      \\
		2012                  & 97.88                                 & 96.65        & 1.25      & 96.81        & 1.09      & 96.88        & 1.02       & 97.00        & 0.89       & 97.33       & 0.55      & 96.93       & 0.96      \\
		2013                  & 99.82                                 & 99.87        & 0.05      & 99.71        & 0.11      & 99.65        & 0.17       & 99.80        & 0.02       & 99.60       & 0.22      & 99.74       & 0.08      \\
		2014                  & 102.24                                & 103.38       & 1.11      & 102.72       & 0.47      & 102.50       & 0.25       & 101.74       & 0.49       & 101.62      & 0.60      & 101.83      & 0.40      \\
		\multicolumn{2}{l}{MAPE$_\mathsf{train}$  (\%)}                              &              & 2.57      &              & 2.58      &              & 2.62       &              & 2.55       &             & 2.12      &             & 2.67      \\
		2015                  & 106.27                                & 107.21       & 0.88      & 105.86       & 0.39      & 105.43       & 0.80       & 104.17       & 1.98       & 103.731     & 2.39      & 104.23      & 1.92      \\
		2016                  & 110.02                                & 111.40       & 1.25      & 109.13       & 0.81      & 108.44       & 1.44       & 106.09       & 3.57       & 105.41      & 4.19      & 106.31      & 3.37      \\
		2017                  & 111.40                                & 115.98       & 4.11      & 112.53       & 1.01      & 111.54       & 0.12       & 108.01       & 3.04       & 107.06      & 3.90      & 108.41      & 2.69      \\
		2018                  & 116.05                                & 121.00       & 4.27      & 116.07       & 0.02      & 114.73       & 1.14       & 109.94       & 5.26       & 108.76      & 6.28      & 110.54      & 4.75      \\
		\multicolumn{2}{l}{MAPE$_\mathsf{test}$  (\%)}                                         &              & 2.63      &              & 0.56      &              & 0.87       &              & 3.47       &             & 4.79      &             & 3.60     \\
		\bottomrule
	\end{tabular}
\label{result_sewage}
			\begin{tablenotes}
				\footnotesize
				\item []				
				Note that the model structures are
				(\romannumeral1) $\frac{d}{dt}x(t) = a x(t) + b x(t)y(t)$;
				(\romannumeral2) $\frac{d}{dt}x(t) = b x(t)y^{\gamma-1}(t) $;
				(\romannumeral3) $\frac{d}{dt}x(t) = a x(t) + b x(t)y^{\gamma-1}(t)$;
				(\romannumeral4) the model order is arima(0,1,0);
				(\romannumeral5) the embedding dimension is 2 and the kernel type is polynomial;
				(\romannumeral6) the model type is feed-forward neural network with 2 lagged inputs, 3 neurons in the only hidden layer. 
			\end{tablenotes}
	\end{threeparttable}
\end{table}

\begin{table}[h]
	\begin{threeparttable}
	\centering
	\caption{Training and testing results for total amount of water use obtained reconstructed nonlinear grey models and comparison methods.}
	\scriptsize
	\newcommand{\tabincell}[2]{\begin{tabular}{@{}#1@{}}#2\end{tabular}}
	\begin{tabular}{llllllllllllll}
		\toprule
		\multirow{2}{*}{Year} & \multirow{2}{*}{ \tabincell{l}{True values \\ ($10^9$ m$^3$)}} & IGVM$^\mathrm{\romannumeral1}$   &       & INGM$^\mathrm{\romannumeral2}$   &       & INGBM$^\mathrm{\romannumeral3}$  &      & ARIMA$^\mathrm{\romannumeral4}$    &       & SVR$^\mathrm{\romannumeral5}$    &      & NNAR$^\mathrm{\romannumeral6}$  &      \\
		\cmidrule(lr){3-4}\cmidrule(lr){5-6}\cmidrule(lr){7-8}\cmidrule(lr){9-10}\cmidrule(lr){11-12}\cmidrule(lr){13-14}
		&                              & Values & APE  & Values & APE & Values & APE & Values & APE  & Values & APE & Values & APE \\
		\midrule
		2004                  & 1061.25                               & 1038.08      & 2.18      & 1040.04      & 2.00      & 1001.65       & 5.62      & --            & --        & --           & --       & --           & --       \\
		2005                  & 1058.94                               & 1071.36      & 1.17      & 1082.30      & 2.21      & 1064.60       & 0.53      & 1073.66       & 1.17      & --           & --       & --           & --       \\
		2006                  & 1115.09                               & 1101.01      & 1.26      & 1107.61      & 0.67      & 1106.32       & 0.79      & 1012.89       & 4.35      & 1114.31      & 0.07     & 1079.13      & 3.23     \\
		2007                  & 1121.56                               & 1126.53      & 0.44      & 1125.75      & 0.37      & 1135.37       & 1.23      & 1118.72       & 0.33      & 1134.27      & 1.13     & 1107.78      & 1.23     \\
		2008                  & 1161.07                               & 1147.49      & 1.17      & 1139.90      & 1.82      & 1155.52       & 0.48      & 1092.15       & 2.62      & 1145.29      & 1.36     & 1135.32      & 2.22     \\
		2009                  & 1164.05                               & 1163.51      & 0.05      & 1151.51      & 1.08      & 1168.91       & 0.42      & 1168.19       & 0.61      & 1157.22      & 0.59     & 1158.96      & 0.44     \\
		2010                  & 1174.64                               & 1174.31      & 0.03      & 1161.36      & 1.13      & 1176.93       & 0.20      & 1163.56       & 0.04      & 1178.68      & 0.34     & 1177.50      & 0.24     \\
		2011                  & 1173.80                                & 1179.69      & 0.50      & 1169.92      & 0.33      & 1180.56       & 0.58      & 1185.58       & 0.93      & 1178.47      & 0.40     & 1184.38      & 0.90     \\
		2012                  & 1155.60                                & 1179.54      & 2.07      & 1177.47      & 1.89      & 1180.54       & 2.16      & 1202.10       & 2.41      & 1190.61      & 3.03     & 1188.48      & 2.85     \\
		2013                  & 1194.20                                & 1173.88      & 1.70      & 1184.25      & 0.83      & 1177.45       & 1.40      & 1127.10       & 2.47      & 1204.65      & 0.88     & 1192.83      & 0.11     \\
		2014                  & 1162.20                                & 1162.79      & 0.05      & 1190.38      & 2.43      & 1171.73       & 0.82      & 1236.30       & 3.52      & 1180.00      & 1.53     & 1191.07      & 2.48     \\
		\multicolumn{2}{l}{MAPE$_\mathsf{train}$  (\%)}                                       &              & 0.75      &              & 1.24      &               & 0.91      &               & 1.62      &              & 1.04     &              & 1.52     \\
		2015                  & 1153.10                                & 1146.50      & 0.57      & 1195.99      & 3.72      & 1163.79       & 0.93      & 1172.30       & 0.87      & 1239.50      & 7.49     & 1237.28      & 7.30     \\
		2016                  & 1154.00                                  & 1125.29      & 2.49      & 1201.16      & 4.09      & 1153.93       & 0.01      & 1182.39       & 2.54      & 1238.74      & 7.34     & 1216.65      & 5.43     \\
		2017                  & 1165.90                                & 1099.54      & 5.69      & 1205.94      & 3.43      & 1142.42       & 2.01      & 1192.49       & 3.34      & 1241.89      & 6.52     & 1259.58      & 8.03     \\
		2018                  & 1155.00                                  & 1069.69      & 7.39      & 1210.40      & 4.80      & 1129.51       & 2.21      & 1202.58       & 3.15      & 1239.07      & 7.28     & 1251.82      & 8.38     \\
		\multicolumn{2}{l}{MAPE$_\mathsf{test}$  (\%)}                                      &              & 4.03      &              & 4.01      &               & 1.29      &               & 2.47      &              & 7.05     &              & 7.29    \\
		\bottomrule
	\end{tabular}
\label{result_consumption}
		\begin{tablenotes}
			\item []
			Note that all the models in this table share same structures with those in Table \ref{result_sewage}.
		\end{tablenotes}
\end{threeparttable}
\end{table}

\begin{figure}[h]
	\centering
	\subfigure{
		\begin{minipage}[t]{0.5\linewidth}
			\centering
			\includegraphics[width=0.85\linewidth]{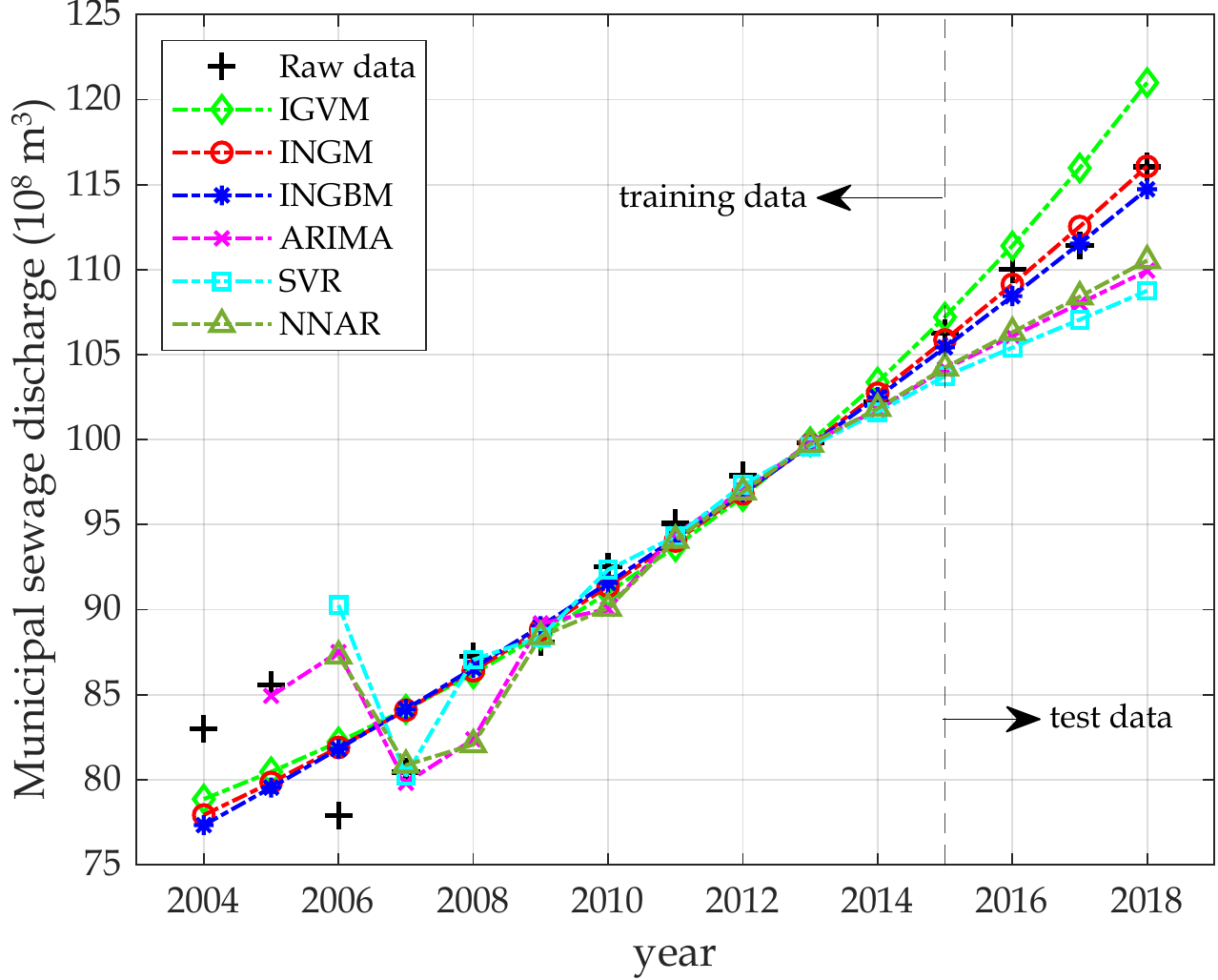}
		\end{minipage}%
	}%
	\subfigure{
		\begin{minipage}[t]{0.5\linewidth}
			\centering
			\includegraphics[width=0.86\linewidth]{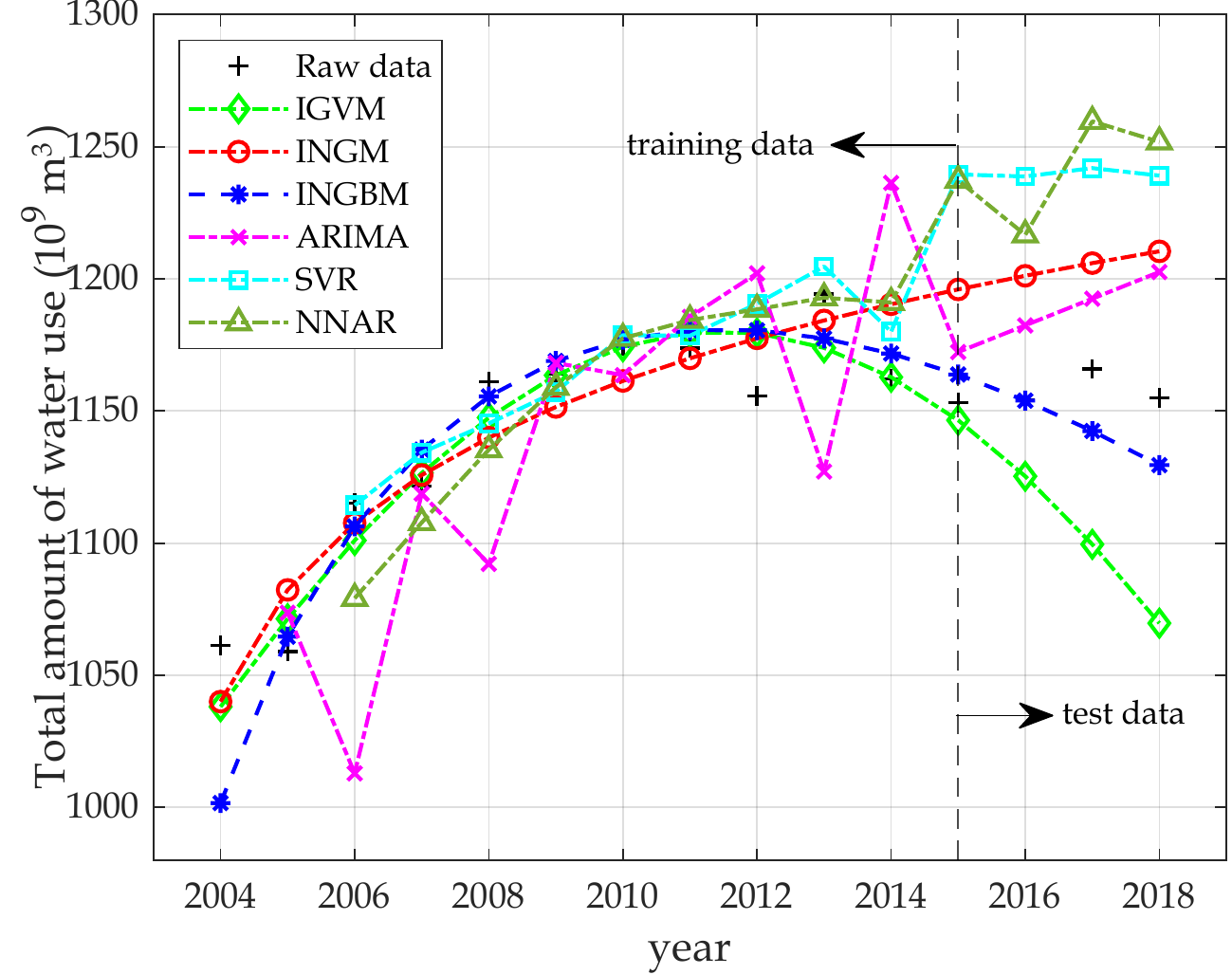}
		\end{minipage}%
	}%
	\caption{Training and testing results for the YRD yearly municipal sewage discharge of and total amount of water use.}
	\label{sewage}
\end{figure}

\subsection{Comparison with other methods}
In addition, the reconstructed nonlinear grey models are compared with time series analysis methods, including autoregressive integrated moving average (ARIMA), support vector regression (SVR), and neural network autoregression (NNAR).
SVR and NNAR models are implemented by $\mathsf{svm}$ and $\mathsf{neural\_network}$ functions in $\mathsf{sklearn}$ package \citep{pedregosa2011scikit}, and AIRMA by the $\mathsf{arima\_model}$ function in $\mathsf{statsmodels}$ package \citep{seabold2010statsmodels} in Python software.

Tables \ref{result_sewage}-\ref{result_consumption} and Figure \ref{sewage} depict the comparison between integro-differential equation models and aforementioned methods.
The results show that, for both cases, the three comparison methods has satisfactory performance on fitting data with 2.55\%, 2.12\% and 2.63\% MAPE$_\mathsf{train}$s in the first case (1.62\%, 1.04\% and 0.77\% MAPE$_\mathsf{train}$s in the second case), which is almost the same as the proposed nonlinear grey models.
But their MAPE$_\mathsf{test}$s vary a lot.
Obviously, in the first case, SVR has the best performance in fitting data with MAPE$_\mathsf{train}$=2.12\%, whereas performs poorly in forecasting with MAPE$_\mathsf{test}$=4.79\%.
In the case of water consumption, the MAPE$_\mathsf{train}$ and MAPE$_\mathsf{test}$ of SVR are 1.04\% and 7.05\% respectively, indicating high fitting accuracy and worse forecasting accuracy.
With regard to NNAR, this result occurs again, which may be over-fitting due to the small data scale.
ARIMA considers the autocorrelation, thereby having satisfactory fitting and forecasting performance, although not the optimal one.
Overall, the INGBM model can be considered as the most appropriate one in the water resources of the YRD cases.

\subsection{Short discussion}
By employing the INGBM model, we calculate three-step ahead forecasting results of municipal sewage discharge and water consumption as \{118.01, 121.38, 124.85\} and \{1115.4, 1100.2, 1084.2\}, respectively.
The results show that the total amount of water use in the YRD is expected to decline in the next few years, which is in accordance with the national water control policy. 
However, with the development of industrialization, the municipal sewage discharge will continue to grow from 2019-2021.
Therefore, the government should constantly enhance the municipal sewage treatment capacity from these aspects: (\romannumeral1) expand the scale of sewage treatment plants to relieve the pressure of sewage treatment system, and improve the standard of sewage treatment and reuse;
(\romannumeral2) constantly update the sewage treatment equipment and sewage treatment technology to increase the efficiency of sewage treatment;
(\romannumeral3) further improve the supervision and management, pay attention to improve the management ability and strengthen the real-time supervision ability.

\section{Conclusions} \label{sec7}

In this paper, we revisit nonlinear grey system models with an integro-differential equation, including unification, reconstruction, and application.
We propose a unified nonlinear grey system model, which covers but not limited to the existing single-variable, multi-variable variable and multi-output models, making it easier for researchers to perform property analysis.
It has shown that the unified framework can be reconstructed to an equivalent integro-differential equation, and an integral operator bridges the two models.
It has shown that the structural parameters and initial values can be estimated simultaneously by the integral matching approach.
The unified framework and a reduced-order integro-differential equation-based model both concern the dynamic modelling, where the former model the Cusum series implicit, whereas the later provide a direct modelling paradigm.
The large-scale simulations demonstrate that the reconstructed model is superior to the original ones, from estimation accuracy to robustness to noise.
In the practical application, data sets of the municipal sewage discharge and the total amount of water use of the YRD serve to validate the reconstructed models in forecasting for real data.

The present work opens up a new way for the development of nonlinear grey system models,
and there are several interesting directions expanding this work:
The unified representation \eqref{gm} is used on the assumption with model structural is known in advance.
Determination of model structural from time series in the presence of measurement error is an interesting topic in the future.
The modelling mechanism of nonlinear grey models has been invaginated from the perspective of mathematics, however, the superiority of which needs to be further probe in physical analysis and practical applications.

\appendix
\section*{Appendix}
\section{Proof for Lemma \ref{lem2}} \label{app1}
\begin{proof}
	Our first destination is to show the case of $d=1$.
	Denoting the parameter vector as
	$\begin{bmatrix}
		{\eta} ~|~
		{\theta}_{\text{L}} ~ | ~
		\bm{\theta}_{\text{N}} ~
	\end{bmatrix} =
	\begin{bmatrix}
		\eta ~|~ a ~ | ~ b_1 ~\cdots~b_{p}
	\end{bmatrix}$,
	the left-hand side of equation \eqref{im2} can be expressed as a algebraic form
	\begin{equation}\label{lm2.1}
		\eta + a \tilde{x}(t_k) + 
		\sum_{m=1}^{p+1}b_m\left[ \left( \eta + \tilde{x}(t_k) \right)^m  - \eta^m\right] 
	\end{equation}
	then, according to the binomial theorem, the polynomial term can be expanded into 
	\[
	\left( \eta + \tilde{x}(t_k)\right) ^m  =
	\sum_{i=1}^{m} \binom{m}{i} \eta^{m-i} \tilde{x}(t_k)^i
	\]
	and thus the problem \eqref{lm2.1} can be manipulated to
	\[
	\eta + a\tilde{x}(t_k) + 
	b_1 \sum_{i=1}^{2} \binom{2}{i}\eta^{2-i} \tilde{x}(t_k)^{i} +
	\cdots +
	b_p\sum_{i=1}^{p+1} \binom{p+1}{i}\eta^{p+1-i} \tilde{x}(t_k)^{i}
	\]
	which can be represented as a matrix form
	\begin{align}\notag
		\begin{bmatrix}
			\eta ~ a ~ b_1 ~ \cdots ~ b_p
		\end{bmatrix}
		\left[
		\begin{array}{c:c:ccc}
			1 & 0 & 0 & \cdots & 0 \\
			\hdashline
			0 & 1  & 0 & \cdots & 0\\
			\hdashline
			0 & \binom{2}{1}\eta^1 & \binom{2}{2}\eta^0 & & \\
			\vdots & \vdots & \vdots & \ddots & \\
			0 & \binom{p+1}{1}\eta^{p} & \binom{p+1}{2}\eta^{q-2}
			& \cdots	& \binom{p+1}{p+1}\eta^{0}
		\end{array}
		\right] 
		\begin{bmatrix}
			1 \\
			\tilde{x}(t_k)\\
			\tilde{x}(t_k)^2 \\
			\vdots \\
			\tilde{x}(t_k)^q
		\end{bmatrix}.   \notag
	\end{align} 
	Denoting $
	\bm{\phi} = \begin{bmatrix}
		\binom{2}{1} \eta^{1} \\
		\vdots \\
		\binom{p+1}{1} \eta^{p}
	\end{bmatrix},
	~
	\text{and}
	~
	\bm{\varphi} = \begin{bmatrix}
		\binom{2}{2} \eta^{0} & & \\
		\vdots & \ddots & \\
		\binom{p+1}{2} \eta^{q-2} & \cdots & \binom{p+1}{p+1} \eta^{0}
	\end{bmatrix}	
	$ gives the equation \eqref{im2}.
	
	Second, when $d \ge 2$, since $\bm{N}(\cdot)$ consists of quadratic nonlinearities, it follows that $p=\frac{d(d+1)}{2}$ and, each component of $\bm{N}(\cdot)$ shares a same form 
	\[
	\left( \eta_i + x_i \right) \left( \eta_j + x_j \right) - \eta_i \eta_j= 
	\eta_i x_j + \eta_j x_i + x_i x_j,
	~i,j=1,2,\cdots,d.
	\]
	
	Accordingly, the second term at the right-hand side in equation \eqref{im.dis} is transformed into
	\begin{equation}\label{eq24}
		\bm{N}(\bm{\eta} + \tilde{\bm{x}}(t_k)) - \bm{N}(\bm{\eta}) = 
		\bm{\psi}\tilde{\bm{x}}(t_k) + \bm{N}(\tilde{\bm{x}}(t_k))
	\end{equation}
	where $\bm{\psi}=\begin{bmatrix}
		\psi_d^\top & \psi_{d-1}^\top & \cdots & \psi_1^\top
	\end{bmatrix}^\top$,
	for
	\[
	\psi_{d-i} =  
	\underset{i \text{ columns}}{\underbrace{
			\left[ 
			\begin{array}{cccc}
				0 		& \cdots & 0		\\
				0 		& \cdots & 0		\\
				\vdots  & \ddots & \vdots		\\
				0		& \cdots & 0		\\
			\end{array}
			\right.
	}}
	\;
	\underset{d-i \text{ columns}}{\underbrace{
			\left.
			\begin{array}{cccc}
				2\eta_{i+1}		&				&		&	0	\\
				\eta_{i+2}		&\eta_{i+1}		&		&		\\
				\vdots 			&				& \ddots&		\\
				\eta_d 			&0				&		&	\eta_{i+1}	\\
			\end{array}
			\right]
	}},
	~
	i = 0,~1,\cdots,~d-1
	\]
	
	Substituting equation \eqref{eq24} into problem \eqref{im.dis} gives equation \eqref{im3}.	
\end{proof}	
	
\section{Modelling results for a multi-output model} \label{append1}

This simulation example is designed to further evaluate the proposed method under a high-dimensional setting.
We adopt the reduced nonlinear grey Lotka-Volterra model:
\[
\begin{cases}
	\frac{d}{dt} x_1 = a_{1}x_1 - b_{1}\left[ x_1\left( \eta_2 + \int_{t_1}^{t} x_2(\tau) d\tau \right) + x_2\left(\eta_1 + \int_{t_1}^{t} x_1(\tau) d\tau \right)  \right]   \\
	\frac{d}{dt} x_2 = a_{2}x_2 - b_{2}\left[ x_1\left( \eta_2 + \int_{t_1}^{t} x_2(\tau) d\tau \right) + x_2\left(\eta_1 + \int_{t_1}^{t} x_1(\tau) d\tau \right)  \right]  \\
\end{cases},
~
\begin{array}{l}
	x_1(t_1) = \eta_1 \\
	x_2(t_2) = \eta_2
\end{array}
\]
Correspondingly, nonlinear grey Lotka-Volterra model is given by
\[
\begin{cases}
	\frac{d}{dt} y_1 = a_{1} y_1 - b_{1} y_1 y_2 \\
	\frac{d}{dt} y_2 = a_{2} y_2 - b_{2} y_1 y_2
\end{cases},
~\begin{array}{l}
	y_1(t_1) = \eta_1\\
	y_2(t_1) = \eta_2
\end{array}
\]

Here, we set the structural parameters to $a_{1} = 1.2, ~b_{1} = 0.3, ~a_{2} = -1.0$ and $b_{2} = -0.4$, the initial value to $\eta_1 = 5.0$ and $\eta_2 = \frac{2}{3}$.

Experiments are conducted in the following fashion.
Initially, let $T = 5$ and $h=0.01$, a number of $n = 501$ samples are generated, following that the noise level $[4\%,~8\%,~12\%,~16\%]$ are considered.
In turn, later, we conduct the experiment with a fixed noise magnitude $4\%$ and varying sample size $ n = [21,~51,~101,~501]$.
Simulation results are summarised Figures \ref{lot_nlen} and \ref{lot_nois}.

\begin{figure}[h]
	\centering
	\includegraphics[width=1\linewidth]{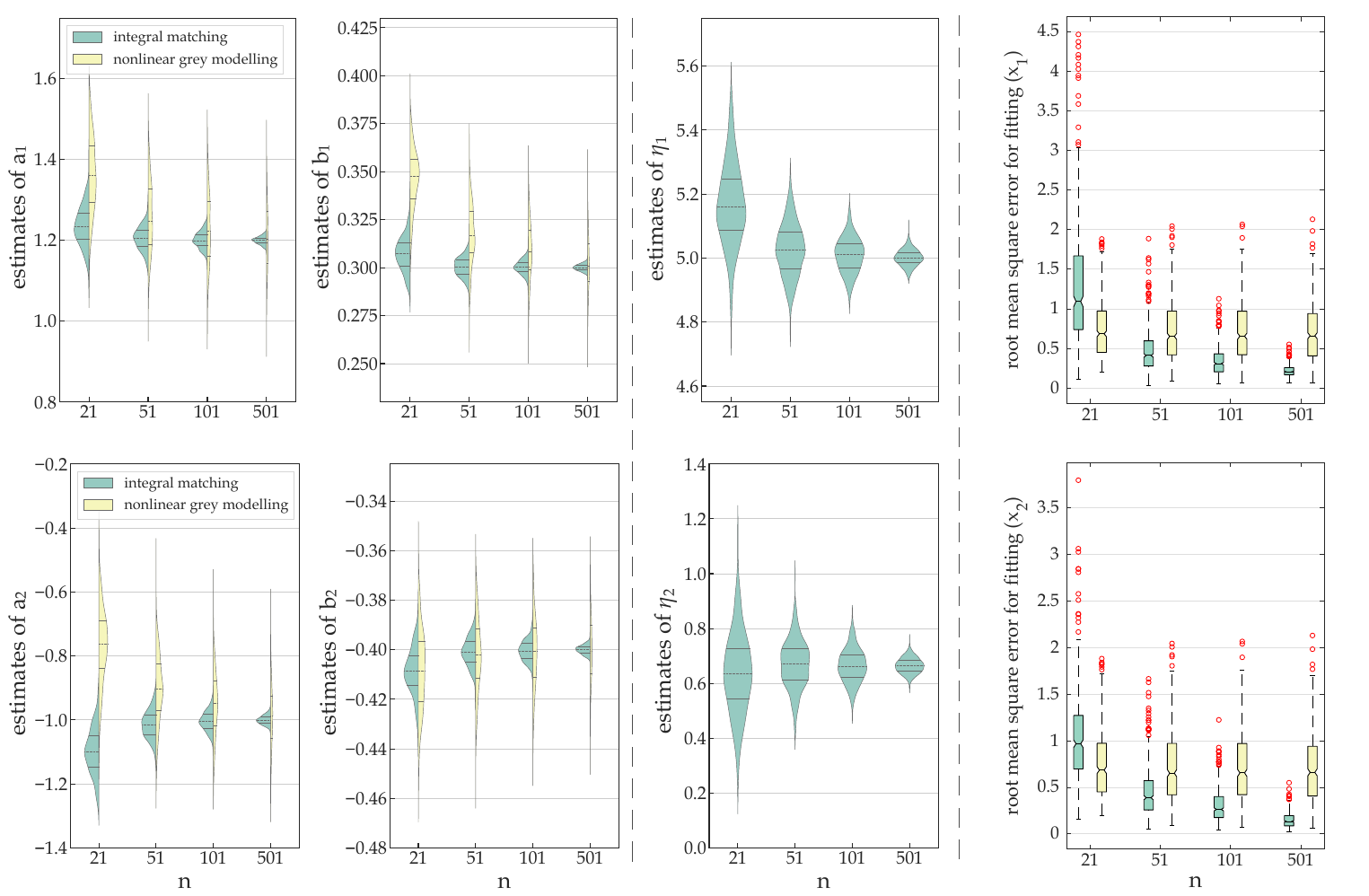}
	\caption{Violin plots of estimated structural parameters, initial value and boxplot of fitting error.
		The true parameters and initial value are $a_{1}=1.2,~b_{1}=0.3,~a_{2}=-1,~b_{2}=-0.4$ and $\eta_1=5,~\eta_2=2/3$.
		Note, nonlinear grey modelling uses the true values of the initial condition $\eta_1=5$ and $\eta_2=2/3$ for solving the time response function.}
	\label{lot_nlen}
\end{figure}
\begin{figure}[h]
	\centering
	\includegraphics[width=1\linewidth]{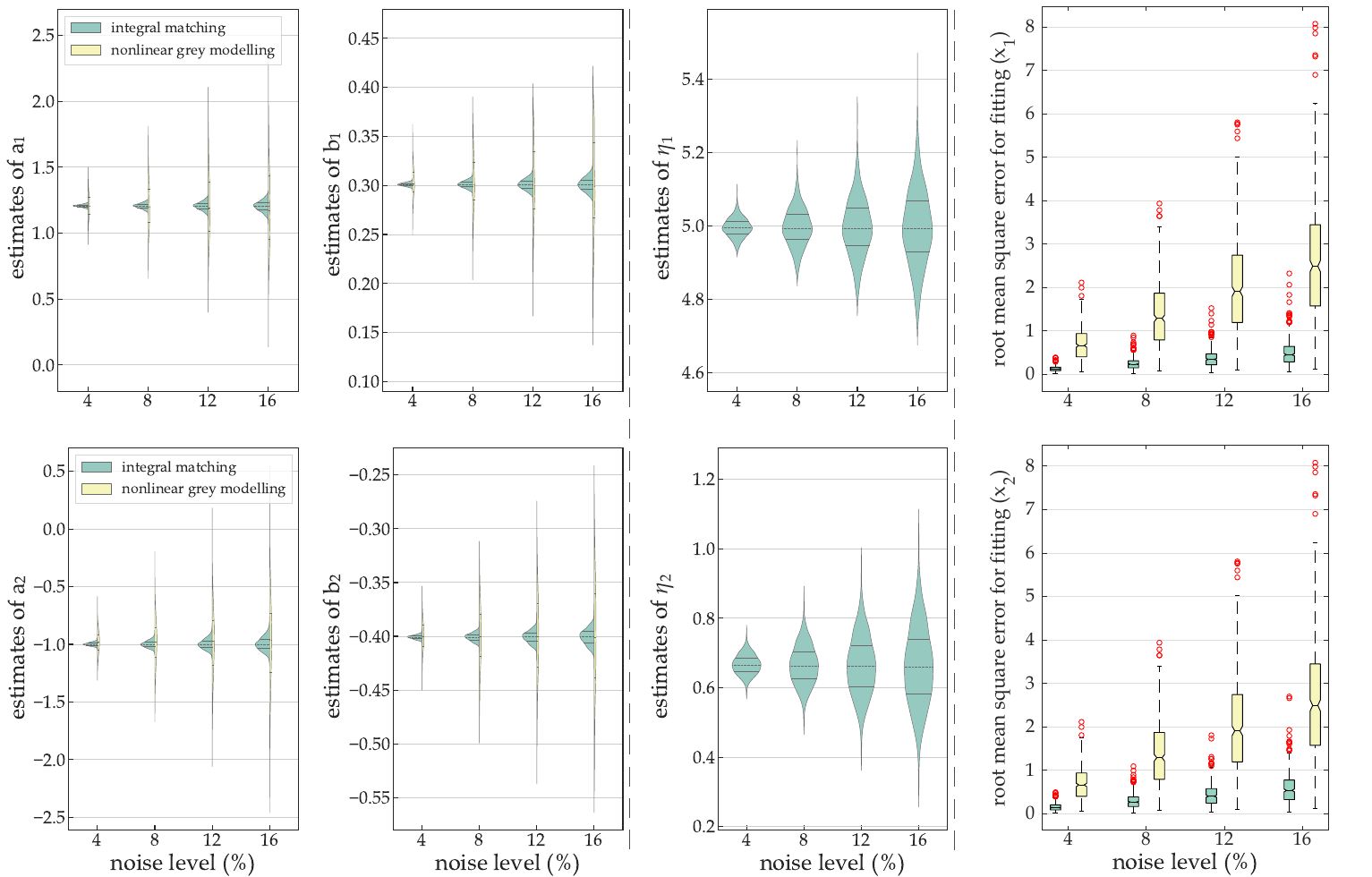}
	\caption{Violin plots of estimated structural parameters, initial value and boxplot of fitting error. The true parameters and initial value are $a_{1}=1.2,~b_{1}=0.3,~a_{2}=-1,~b_{2}=-0.4$, and $\eta_1=5,~\eta_2=2/3$.
		Note, nonlinear grey modelling uses the true values of the initial condition $\eta_1=5$ and $\eta_2=2/3$ for solving the time response function.}
	\label{lot_nois}
\end{figure}

\section*{Acknowledgment}
This work was supported by the National Natural Science Foundation of China (72171116) and the Fundamental Research Funds for the Central Universities of China (NP2020022).

\section*{Data accessibility}
Data and codes are available at \url{https://github.com/Yanglu0319/Integral-NGM}.

\printcredits

\bibliographystyle{cas-model2-names}

\bibliography{NGBM}

\end{document}